%
%
%
%
%

%
\RequirePackage{fix-cm}
\documentclass[final, smallcondensed]{svjour3}     
\smartqed  
\usepackage{graphicx}
%
%

\usepackage[authoryear]{natbib} 
\usepackage{comment}    
\usepackage{graphicx}  
\usepackage{amsmath}
\usepackage{amssymb}
\usepackage{mathrsfs}
\usepackage{mathtools}  
\usepackage{hhline}
\usepackage{comment}
\usepackage[noend]{algpseudocode}
\usepackage{algorithm}
\usepackage[hidelinks, colorlinks=true, linkcolor=black, citecolor=blue, urlcolor=black]{hyperref}

\usepackage{etoolbox}

\makeatletter

\patchcmd{\NAT@citex}
{\@citea\NAT@hyper@{%
		\NAT@nmfmt{\NAT@nm}%
		\hyper@natlinkbreak{\NAT@aysep\NAT@spacechar}{\@citeb\@extra@b@citeb}%
		\NAT@date}}
{\@citea\NAT@nmfmt{\NAT@nm}%
	\NAT@aysep\NAT@spacechar\NAT@hyper@{\NAT@date}}{}{}

\patchcmd{\NAT@citex}
{\@citea\NAT@hyper@{%
		\NAT@nmfmt{\NAT@nm}%
		\hyper@natlinkbreak{\NAT@spacechar\NAT@@open\if*#1*\else#1\NAT@spacechar\fi}%
		{\@citeb\@extra@b@citeb}%
		\NAT@date}}
{\@citea\NAT@nmfmt{\NAT@nm}%
	\NAT@spacechar\NAT@@open\if*#1*\else#1\NAT@spacechar\fi\NAT@hyper@{\NAT@date}}
{}{}

\makeatother

\newcommand{\raf}[1]{(\ref{#1})}

\newcommand{\OPT}{\ensuremath{\textsc{Opt}}}

\def\CC{\mathbb C}
\def\RR{\mathbb R}
\def\ZZ{\mathbb Z}

\def\cC{\mathcal C}

\def\cF{\mathcal F}
\def\cG{\mathcal G}
\def\cI{\mathcal I}
\def\cH{\mathcal H}
\def\cJ{\mathcal J}

\def\cL{\mathcal L}

\def\cN{\mathcal N}
\def\cS{\mathcal S}
\def\cT{\mathcal T}

\def\cU{\mathcal U}

\def\bzero{\mathbf 0}

\newcommand{\hide}[1]{}

\newcommand{\re}{\ensuremath{\mathrm{Re}}}
\newcommand{\im}{\ensuremath{\mathrm{Im}}}

\newcommand{\cP}{\ensuremath{\mathcal{P}}}
\newcommand{\cQ}{\ensuremath{\mathcal{Q}}}

\newcommand{\cV}{\ensuremath{\mathcal{V}}}
\newcommand{\cE}{\ensuremath{\mathcal{E}} }


\newcommand{\polylog}{\operatorname{polylog}}

\def\eps{\varepsilon}

\newcommand{\bone}{\ensuremath{\boldsymbol{1}}}

\usepackage{pifont}
\renewenvironment{proof}{{\noindent \bfseries \textit{Proof:}}}{}
\renewenvironment{remark}{{\noindent \bfseries \textit{Remark:}}}{}
\usepackage{fancyhdr}
%
\date{}

\begin{document}

\title{Approximations for Generalized Unsplittable Flow on Paths with Application to Power Systems Optimization\thanks{This work was supported by the Khalifa University under Awards No. CIRA-2020-286 and KKJRC-2019-Trans1.}}

\titlerunning{Approximations for Unsplittable Stairstep Flow on Paths}


\author{Areg Karapetyan \and Khaled Elbassioni\and Majid Khonji \and Sid Chi-Kin Chau}
\authorrunning{A.\, Karapetyan, K.\, Elbassioni , M.\, Khonji and S.\, C.-K. Chau}
%
\institute{A.\, Karapetyan \and K.\, Elbassioni \and M.\, Khonji \at Department of Electrical Engineering and Computer Science, Khalifa University, UAE
	\email{\{areg.karapetyan, khaled.elbassioni, majid.khonji\}@ku.ac.ae}
	\and
	S.\, C.-K. Chau \at
	Research School of Computer Science, Australian National University, Canberra, Australia
	\email{sid.chau@anu.edu.au}
}

\maketitle

\begin{abstract}
	The \textit{Unsplittable Flow on a Path} (UFP) problem has garnered considerable attention as a challenging combinatorial optimization problem with notable practical implications. Steered by its pivotal applications in \textit{power engineering}, the present work formulates a novel generalization of UFP, wherein demands and capacities in the input instance are monotone step functions over the set of edges. As an initial step towards tackling this generalization, we draw on and extend ideas from prior research to devise a quasi-polynomial time approximation scheme (QPTAS) under the premise that the demands and capacities lie in a quasi-polynomial range. Second, retaining the same assumption, an efficient logarithmic approximation is introduced for the single-source variant of the problem. Finally, we round up the contributions by designing a (kind of) black-box reduction that, under some mild conditions, allows to translate LP-based approximation algorithms for the studied problem into their counterparts for the \textit{Alternating Current Optimal Power Flow} (AC OPF) problem -- a fundamental workflow in operation and control of power systems.

%
%

	
	\keywords{Unsplittable Flow Problem \and QPTAS \and LP Rounding \and Logarithmic Approximation \and Power Systems Engineering \and AC Optimal Power Flow.}
\end{abstract}

{\section*{Nomenclature}
	
\vspace*{-20pt}	
\def\arraystretch{1.35}%
\begin{table}[H]
	\centering
	\begin{tabular}{|cl|}
		\hline
		\multicolumn{1}{|c|}{\textbf{Notation}} & \textbf{Description} \\ \hline
		\multicolumn{2}{r}{	\rule{0pt}{3ex}\textit{\fontsize{6.5}{6}\selectfont Summary of key notations related to the generalized UFP}}                            \\ \hline
		$\cG$& Line graph                   \\
		$\cV$& Set of vertices (indexed by $i$ or $j$)                   \\
		$\cE$& Set of $m$ edges (indexed by $e$ or $(i, j)$)                   \\
		$\cI$& Set of $n$ users (indexed by $k$)                    \\ 
		$\cQ$& Grouping (of cardinality Q) of users based on utility-to-demand ratio                      \\
		$\cI^{q}$& Set of users in group $q \in \cQ$                  \\
		$\cL^q$& Set of users with ``large'' demands in group $q \in \cQ$                      \\ 
		$\cS^q$& Set of users with ``small'' demands in group $q \in \cQ$                    \\ 
		\hline \hline
		
		$d$& Number of dimensions               \\
		$u_k$& User $k$'s utility value             \\
		$x_k$& Decision variable for user $k$          \\
		$f_k^r (\cdot)$& User $k$'s demand function over $\cE$ in dimension $r \in \{1,2, \hdots, d\}$                    \\
		$e_k^r, \hat{e}_k^r$& User $k$'s demand function's binding edges in dimension $r \in \{1,2, \hdots, d\}$         \\
		$c^r (\cdot)$& Capacity function over $\cE$ in dimension $r \in \{1,2, \hdots, d\}$                      \\ \hline\hline
		
		$T_1, \hdots , T_d$& $d$ positive integers              \\
		$T$& Maximum among $T_1, \hdots , T_d$              \\
		$C_1, \hdots , C_d$& $d$ integers each greater than $1$              \\
		$P_r$& Number of edge partitions in dimension $r \in \{1,2, \hdots, d\}$               \\
		$\epsilon$& Constant in $(0,1)$              \\\hline

		\multicolumn{2}{r}{	\rule{0pt}{3ex}\textit{\fontsize{6.5}{6}\selectfont Summary of key notations related to AC OPF}}                            \\ \hline
		$\cT$& Graph of a radial distribution network               \\
		$\cV^+$& Set of vertices excluding the root $0$             \\
		$\cV_i^+$& Set of vertices in $\cV^+$  excluding the node $i$  \\
		$\cN$& Set of all users (electrical loads) (indexed by $k$)                \\
		$\cU_j$& Set of users at node $j$                 \\
		$\cN_j$& Set of users residing on the subpath rooted at node $j$                 \\
		$\cF$& Set of users with elastic power demands               \\
		$\cP_j$& The (unique) path from node $j$ to the root $0$\\\hline \hline
		
		$s_k$& User $k$'s complex power demand             \\
		$z_{i,j}$& Impedance of power line $(i,j)$         \\
		$V_j$& Voltage at node $j$                   \\
		$v_j$& Voltage magnitude square at node $j$                   \\
		$I_{i,j}$& Current traversing through line $(i,j)$                  \\
		$l_{i,j}$& Squared magnitude of current flowing through line $(i,j)$          \\
		$S_{i,j}$& Complex power flowing from node $i$ to node $j$                    \\ \hline
	\end{tabular}
	\label{tab:my-table}
\end{table}}
\def\arraystretch{1}%

\section{Introduction}

The UFP, in its most generic form, takes as input a capacitated line graph along with a collection of flow requests, each parameterized by a demand, a profit (utility) and a pair of source-sink vertices. Constrained by edge capacities, the pursued objective is to compute a maximum profit subset of requests routable simultaneously. Despite its apparent simplicity, UFP specializes to a number of classical NP-hard combinatorial problems, including the Knapsack problem (when the graph comprises a solitary edge) and the Maximum Edge-disjoint Path problem (when all demands and capacities are set to unity). On the practical side, this problem underlies a spectrum of real-world applications in communication networks~\citep{Bar-Noy2001}, space missions~\citep{Hall1994}, the Web~\citep{Albers:1999} and data centers management~\citep{BCES06}, to name a few.

Recently, several studies have revisited UFP generalizing it from different perspectives. In~\citep{10.1007/978-3}, UFP is extended to the Storage Allocation problem, where the requests are additionally characterized by a vertical position (i.e., height) and a coupling constraint is imposed enforcing a non-overlapping drawing of them. Another line of work in~\citep{Adamaszek2018}, adapted the problem to the setting of a submodular objective function stimulated by theoretical and practical appeal thereof. Expanding the application scope further, this paper introduces a novel generalization of UFP. Previously, Cook et al. \citep{BFb0055088} developed an interesting framework linking electrical power transmission with the unsplittable flow on general graphs. Solidifying this nexus, we establish a formal bond between UFP and the AC OPF, an essential problem in power systems engineering introduced by Carpentier in 1962~\citep{C62} (see Section~\ref{elecflows} for particulars). Formally, the proposed generalization is defined in what follows.
\paragraph{\textbf{Generalization of UFP:}} In the {\it $d$-dimensional Unsplittable Stairstep Flow on a Path} ($d$-USFP) problem, defined here for a fixed positive integer $d\in\ZZ_+$, given is a line network $\cG=(\cV,\cE)$ rooted at node $0$ and a set $\cI$ of $n$ users. Assuming an ascending ordering of the edges by distance from the root {(i.e.,  $e_1<e_2<\ldots<e_m$, where $e_i=(i-1,i)$)}, each user demand is captured by a $d$-dimensional vector $f_k=(f_k^1,\ldots,f_k^d)$, where for $\forall ~r$, $f_k^r:\cE\to\RR_+$ are either monotone non-increasing or monotone non-decreasing step functions over $\cE$ (e.g., $f_k^r$ is monotone non-decreasing if $f_k^r(e)\le f_k^r(e')$ whenever $e\le e'$). As with UFP, if $f_k$ is satisfied (routed), $u_k \geq 0$ is the perceived utility for customer $k$ and each edge $e\in\cE$ is associated with a capacity, which in the current context is a $d$-dimensional vector $c=(c^1,\ldots,c ^d)$, where $c^r:\cE\to\RR_+$ is a monotone non-decreasing function on $\cE$. With this input, $d$-USFP then takes the form
\begin{align}
\big(d\text{-USFP}[\cI,c]\big)&\quad \max_{\substack{x}} \sum_{k\in\cI}u_kx_k,  \notag \\
\text{s.t.} \ & \quad \sum_{k\in \cI}f_k^r(e)x_k\le c^r(e),~\quad \forall~e\in\cE,~r\in\{1,...,d\}\label{ee0}\\
& \quad x_k \in \{0,1\},\quad \forall ~k \in \cI \,.\label{ee1}
\end{align}
In the above formulation, we may assume without loss of generality (by reversing the order on $\cE$ if necessary) that $f_k^r(\cdot)$ is monotone non-decreasing for $\forall~k \in \cI,r \in \{1,...,d\}$. While $d$-USFP can be defined for any such $f_k^r(\cdot)$, this paper confines the scope to functions of {\it separable form}. More precisely, given positive integers $T_1,\ldots,T_d$, monotone (non-decreasing) functions $b^{r,t}:\cE\to\RR_+$, for $t=1,\ldots,T_r, r\in[d]$ as well as non-negative numbers $a^{r,t}_k\in\RR_+$ and edges $e^{r}_k,\hat e^{r}_k\in\cE$, for $t=1,\ldots,T_r$, it is assumed that
\begin{align}
\label{funct}
f_k^r(e)=\sum_{t=1}^{T_r}a_k^{r,t}\tilde b^{r,t}_k(e),
\text{ where } 
\tilde b^{r,t}_k(e)=\left\{
\begin{array}{ll}
0&\text{ if }e< e^{r}_k,\\
b^{r,t}(e)&\text{ if }e^{r}_k\le e< \hat e^{r}_k\,.\\
b^{r,t}(\hat e^{r}_k)&\text{ otherwise.}
\end{array}
\right.
\end{align}
This choice of functions\footnote{Note that, for $\forall~r \in \{1,...,d\}$, the capacity function $c^r(\cdot)$ adheres to this form trivially with $T_r=a^{r,t}=1,  b^{r,t}(\cdot)=c^r(\cdot)$ for $\forall~t$ and $e^{r} = e_1, \hat e^{r}=e_m$.} stems from the relevant structural properties of OPF constraints, as elaborated in Section~\ref{sec:pre}. Yet, even with this condition in place, $d$-USFP remains substantially more complicated than UFP as it entails the packing of monotone step functions of special type (rather than intervals) within a given capacity function. Hence, known techniques for UFP, if amenable, have to be extrapolated in a non-trivial manner to deal with $d$-USFP. In the proceeding paragraphs, we briefly review these techniques. 

\paragraph{\textbf{Related Work:}}\label{subsec:litreview}

As noted previously, UFP is NP-hard since it specializes to the Knapsack problem. In fact, even under the setting of uniform profits and capacities, it has proven to be strongly NP-hard~\citep{Chrobak2012}. In light of this hurdle, most of the prior studies attempted simplified variants of UFP, with the two predominantly common ones being the uniform capacity UFP (UCUFP) and the UFP with the {\it no-bottleneck assumption} (UFP-NBA).

For UCUFP, which was also studied under the name of Resource Allocation problem, the first constant factor approximation was presented in~\citep{1099-1425}, attaining a $6$-approximation via LP rounding techniques. This factor was then refined by~\citep{3-540-47867} to ($2 + \epsilon$). The approach therein decouples the instance into small and large requests, subsequently tackling the former in a fashion analogous to~\citep{1099-1425}, while the latter through dynamic programming.

Ensuing from a more general case, UFP-NBA restricts the maximum demand to be at most the minimum capacity of any edge. The crux of this condition is rooted in the integrality gap of the natural LP relaxation of UFP, which was shown to be $\Omega(n)$ in~\citep{CCGK07}, whereas that of UFP-NBA is O$(1)$. For UFP-NBA, the first constant factor approximation was derived in~\citep{CCGK07}. Improving upon this, Chekuri et al.~\citep{CMS07} obtained a ($2 + \epsilon$)-approximation. These both studies broadly follow the aforementioned framework of decomposing the requests into small and large.

Turning to UFP, in 2006 Bansal et al.~\citep{BCES06} developed a deterministic QPTAS under the assumption that the capacities and demands are bounded by $2^{\polylog(n)}$, thereby ruling out UFP's APX-hardness and hinting to the likely existence of a PTAS. The first polynomial-time approximation algorithm for UFP, yielding O$(\log n)$ guarantee, was introduced in~\citep{BFKS14}. The algorithm is combinatorial, thus allowing to bypass the $\Omega(n)$ integrality gap of the natural LP relaxation. Later on, this result was extended in~\citep{anagnostopoulos2014mazing} to a $(2+\epsilon)$-approximation. {Beating the barrier of $2$, Grandoni et al.~\citep{Grandoni3188745} provided a dynamic programming-based polynomial time algorithm with an approximation factor of $(\frac{5}{3} + \epsilon)$, which in~\citep{1.9781611977073.39} was subsequently improved to $1 +\frac{1}{1+e} +\epsilon < 1.269$ (in expectation) via a novel randomized sketching technique. Closing the search for a PTAS, very recently Grandoni et al.~\citep{10.1145/3519935.3519959} devised a polynomial time $(1 + \epsilon)$-approximation algorithm which tackles UFP by rephrasing the problem as a solitary game and is the best possible result unless P=NP.}

\paragraph{\textbf{Contributions and Paper Outline:}} As such, this study advances extant research in the following two aspects:
\begin{itemize}
	\item[\ding{227}] We introduce a practically-driven generalization of UFP and initiate the search for its efficient approximations. As a first step in this direction, we extend the ideas in~\citep{BCES06} to construct a QPTAS for separable $d$-USFP, under the assumption that the demands and capacities lie in a quasi-polynomial range. Second, relying on the same assumption, we devise an LP-based O$(d\log n)$-approximation for the  single-source setting of the problem (i.e., when all the requests share the same origin). The algorithm hinges on a simple reduction allowing to transform the problem to an easier instance with only O$(d\log n)$ constraints.
	\item[\ding{227}] A (kind of) black-box reduction is derived that, under some practical assumptions, translates an LP-based approximation for separable $d$-USFP into its analog for OPF on line distribution networks. This result complements the strand of research in~\citep{7590153, 8637153, KHONJI201934, Elbassioni2019, 9301221} concerned with developing efficient approximations tailored for combinatorial optimization of AC electric power systems.
\end{itemize}

The remainder of this article is organized as follows. Section~\ref{sec:prelim} covers the adopted notation along with a basic result on partitioning of the studied step functions. Section~\ref{sec:qpt} presents the QPTAS for separable $d$-USFP. In Section~\ref{sec:logarithmic} we provide the logarithmic approximation for single-source separable $d$-USFP. Section~\ref{elecflows} contains an overview of AC OPF problem, followed by its mathematical formulation and the proposed reduction procedure producing LP-based approximations for OPF on line networks. Lastly, Section~\ref{conclrem} concludes the paper with a discussion on applications and connotations of present contributions as well as prospective directions for further developments.  

\section{Notational Convention and Preliminaries}\label{sec:prelim}

{In what follows, unless otherwise explicitly mentioned, constants or variables are denoted in normal font (e.g., $C$, $d$), while sets in calligraphic capital letters (e.g., $\cE$). We let $\bzero$ and $\bone$ symbolize the vectors of all zeros and ones, respectively, and as a shorthand, we shall write $[n]$ to encode the range $\{ 1,...,n \}$ for an integer $n$. Unless stated differently, we designate the operators $\bar{}$ , $\underline{~}$ to capture the maximum and minimum values of a variable/parameter/function, respectively. Given a complex number $\nu\in\CC$, we let $|\nu|$ be its {\it magnitude}, $\arg(\nu)$ be the {\it phase angle} that it makes with the real axis, $\nu^*$ be its complex {\it conjugate} and write $\nu^{\rm R} \triangleq {\rm Re}(\nu)$, $\nu^{\rm I} \triangleq {\rm Im}(\nu)$ for its real and imaginary components, respectively. With a slight abuse of notation, we shall also use the superscript $^*$ to mark the optimal solutions.}

In line with~\citep{BCES06}, we suppose the range of demands and capacities is quasi-polynomial. Mathematically,
$$
\displaystyle \max\left\{\frac{\max_{e\in\cE, k \in \cI, r\in [d]}f_k^r(e)}{\min_{e\in\cE, k \in \cI, r\in [d]: f_k^r(e) > 0}f_k^r(e)},\frac{\max_{e\in\cE, r\in [d]}c^r(e)}{\min_{e\in\cE, r\in [d]: c^r(e) > 0}c^r(e)}\right\}=2^{\polylog(n)}
. 
$$
This assumption is leveraged both, in the QPTAS and the logarithmic approximation, however, one can possibly discard it with techniques from~\citep{BGKMW15}.

The proposed approximations employ the following simple, yet crucial, lemma which, in a sense, states that the line can be partitioned into logarithmic (in $n$) number of regions such that, for each user $k$, the function $f_k(\cdot)$ is roughly constant in each region.    

\begin{lemma}\label{l5}For any $C_r>1$, $r\in[d]$,
	$\cE$ can be partitioned along each coordinate $r\in[d]$ into $P_r<T_r\log_{C_r} \big( \frac{\overline b^r}{\underline b^r}\big)$ intervals $\cE^r=\bigcup_{p=1}^{P_r}\cE_p^r$, where   $\cE_p^r:=\{e_{\underline{i}(p,r)},e_{\underline{i}(p,r)+1},\ldots,e_{\overline{i}(p,r)}\}$,
	and
	$$\cdots<e_{\overline{i}(p-1,r)}<e_{\underline{i}(p,r)}<e_{\underline{i}(p,r)+1}<\cdots<e_{\overline{i}(p,r)}<e_{\underline{i}(p+1,r)}<\cdots,
	$$
	with the following property: 
	\begin{align}\label{property}
	\overline{f}^{p,r}_k\le C_r \cdot \underline{f}^{p,r}_k, \quad \forall k\in\cI,~ \forall p\in[P_r],~ \forall r\in[d],
	\end{align}
	where $\underline{b}^{r}:=\min_{e\in\cE,~t\in[T_r]:~b^{r,t}(e)>0} b^{r,t}(e)$, $\overline{b}^{r}:=\max_{e\in\cE,~t\in[T_r]} b^{r,t}(e)=$ $b^{r,t}$ $(e_n), \underline{f}^{p,r}_k:=\min_{e\in\cE^r_p:~f_k^r(e)>0}f_k^r(e)$  and $\overline{f}^{p,r}_k:=\max_{e\in\cE^r_p}f_k^r(e)=f_k^r(e_{\overline{i}(p,r)})$.
\end{lemma}

\begin{proof}
	Fix $r\in[d]$. For $t\in[T_r]$, 
	let $j^{t,1}\in\cV$ be the smallest index such that ${b}^{r,t}((j^{t,1},j^{t,1}+1))>0$, and for $\ell'=2,3,\ldots,$ let $j^{t,\ell'}\in\cV$, be the smallest index such that 
	\begin{align}\label{eeee1}
	b^{r,t}((j^{t,\ell'},j^{t,\ell'}+1)) &> C_r\cdot b^{r,t}((j^{t,\ell'-1},j^{t,\ell'-1}+1)).
	\end{align} 
	{Let $\bar{\ell}$ be the largest index} for which~\raf{eeee1} is possible (if no such index exists, then the lemma follows with $P_r=1$), and  {set $\ell_{t}:=\bar{\ell}+1$ and $j^{t,\ell_t}:=m$.} The inequality in~\raf{eeee1} implies that  
	$b^{r,t}(e_n)>C_r^{\ell_t-1}\cdot b^{r,t}((j^{t,1},j^{t,1}+1))$ which implies in turn that 
	\begin{align}\label{b1}
	\ell_t\le  \log_{C_r}\frac{b^{r,t}(e_n)}{ b^{r,t}((j^{t,1},j^{t,1}+1))}\le \log_{C_r} \Big( \frac{\overline b^r}{\underline b^r}\Big).
	\end{align}
	Moreover, \raf{eeee1} implies
	\begin{align}\label{fff1}
	\frac{{b}^{r,t}((j-1,j))}{{b}^{r,t}((j'-1,j'))}\le C_r,\quad\forall j,j'\in\{j^{t,\ell'}+1,\ldots,j^{t,\ell'+1}\},\forall \ell'=2,\ldots,\ell_t-1. 
	\end{align}
	
	The set $\bigcup_{t\in[T_r]}\{j^{t,\ell'}:\ell'\in[\ell_t]\}\subseteq\cV$ defines a partition of $\cE$ into $P_r\le\sum_{t=1}^{T_r}(\ell_t-1)$ intervals $\cE^r_1,\ldots,\cE^r_{P_r}$. By \raf{b1}, 
	\begin{align}
	\label{b1--}
	P_r< T_r\log_{C_r} \Big( \frac{\overline b^r}{\underline b^r}\Big).
	\end{align} 
	Consider any interval $\cE^r_p:=\big\{e_{\underline{i}(1,p)},e_{\underline{i}(1,p)+1},\ldots,e_{\overline{i}(1,p)}\big\}$ in the partition. Then by~\raf{funct} and~\raf{fff1}, for any $e,{e}'\in\cE^r_p$, we have
	$$
	\tilde{b}^{r,t}(e)\le C_r\cdot \tilde{b}^{r,t}(e'),\quad \text{ whenever }e'\ge e^r_k
	$$
	and thus, it follows form~\raf{funct} that, whenever  $f_k^r(e')>0$ (and hence $e'\ge e^r_k$), we have
	\begin{align*}
	f_k^r(e)&=\sum_{t=1}^{T_r}a_k^{r,t}\tilde b^{r,t}_k(e)\le \sum_{t=1}^{T_r}a_k^{r,t} C_r \tilde b^{r,t}_k(e')
	\le C_r f_k^r(e'),
	\end{align*} 
	as required by~\raf{property}. \hfill$\blacksquare$
\end{proof} 
\section{A QPTAS for separable $d$-USFP}\label{sec:qpt}

This section presents an LP-based approach that arrives at a QPTAS for separable $d$-USFP with the main result stated in Theorem~\ref{th:mcqptas-}. {The high-level idea behind the provided scheme is to segment the users' demand functions in each partition of edges guaranteed by Lemma~\ref{l5} into ``large'' and ``small'', then effectively combine their solutions by exploiting monotonicity and separability of these functions. As the number of ``large'' demands in the optimal solution turns to be provably bounded, we guess the corresponding decision variables through exhaustive search. On the other hand, the situation with ``small'' demands is more complicated since their presence in the optimal solution can be significant. However, as shown in Lemma~\ref{l4}, for such demands, a given fractional solution $\tilde x$ for separable $d$-USFP can be rounded to an integral one that fits within $\tilde x$'s resource requirements without a notable sacrifice in the objective value.}

For exposition clarity, the analysis is arranged into two subsections, which are then further dissected into more concise paragraphs. We proceed by exploring the properties of near-optimal solutions.



\subsection{Structure of Near-optimal Solutions}\label{sec:structure}
\paragraph{\textbf{Discretizing the instance:}}
Let $u_{\max}:=\max_{k\in\cI}u_k$ and $\epsilon\in(0,1)$ be a given constant. Define $\hat\cI:=\{k\in\cI:~u_k\ge\frac{\epsilon u_{\max}}{n}\}$. Note that $u_{\max}\le\OPT$ for a feasible instance, where $\OPT$ is the value of an optimal solution for $d$-USFP$[\cI,c]$. It follows that $\sum_{k \in\cI\setminus\hat\cI}u_k\le \epsilon u_{\max}\le \epsilon\OPT$ and hence, $\sum_{k \in\hat\cI}u_k\ge (1-\epsilon)\OPT$. 

For $k\in\hat\cI$ and $r\in[d]$, let $\underline{f}^r_k:=\min_{e:~f_k^r(e)>0}f_k^r(e)$,  $\overline{f}^r_k:=\max_{e}f_k^r(e)=f_k^r(e_n)$, $\underline{f}^r:=\min_{k} \underline f_k^r$ and $\overline{f}^r:=\max_{k} \overline f_k^r$. We consider discrete levels of function values: for $l=-\infty,0,1,2,\ldots,\left\lceil\log_{(1+\epsilon)}\frac{n\overline f^r}{\underline f^r}\right\rceil$ let $F_l^r:=(1+\epsilon)^l\underline f^r$,  and $F^r:=\Big\{F_l^r:~l=-\infty,0,1,2,\ldots,\left\lceil\log_{(1+\epsilon)}\frac{n\overline f^r}{\underline f^r}\right\rceil\Big\}$ {with $\overline F:=\max\Big\{|F^r| : r\in[d]\Big\}$.}

\paragraph{\textbf{Partitioning the instance:}} For each $r\in[d]$, we assume the partition of $\cE$ guaranteed by Lemma~\ref{l5},  and let $\underline {a}^r:=\min_{k\in\hat \cI,~t\in[T_r]:~a_k^{r,t}>0}a_k^{r,t}$ and $\overline {a}^r:=\max_{k\in\hat \cI,~t\in[T_r]}$ $a_k^{r,t}$. Note that if $a_k^{r,t}>0$ and $k\in\hat\cI$, then
$
\frac{\epsilon u_{\max}}{n\overline a^r}\le\frac{u_k}{a_k^{r,t}}\le \frac{u_{\max}}{\underline{a}^r}.
$
We partition the users in $\hat\cI$ into $Q:=\prod_{r=1}^d\prod_{t=1}^{T_r}Q_{r,t}$ groups, where $Q_{r,t}:=\left\lceil\log\frac{n\overline a^r}{\epsilon\underline a^r}\right\rceil+1$:
\begin{align}\label{grouping}
\cI^{q}=\Big\{k\in\hat\cI:~2^{q_{r,t}-1}L\le\frac{u_k}{a_k^{r,t}}< 2^{q_{r,t}}L\text{ for all }t\in[T_r],~ r\in[d]\Big\},
\end{align}
 for $q=(q_{r,t}:~t\in[T_r],~r\in[d])\in \cQ:=\prod_{r=1}^d\prod_{t=1}^{T_r}\{1,\ldots,Q_{t,r}-1,\infty\}$, where\footnote{For clarity, it is assumed in \raf{grouping} that the strict inequality is replaced by an inequality when $a_k^{r,t}=0$.} $L:=\frac{\epsilon u_{\max}}{n\overline a^r}$.  Let $\overline Q:=\max_{t,r}Q_{r,t},$. 
 Then $Q\leq \overline Q^{\sum_{r=1}^d{T_r}}$.

\paragraph{\textbf{Structure of the optimal solution:}} Consider an optimal solution $x^*$ to separable $d$-USFP[$\cI,c$]. For $q\in\cQ$, let $\cT^*=\{k\in\hat\cI:~x_k^*=1\}$. Then $(f^*)^{q,r}(e):=\sum_{k\in\cT^*\cap\cI^{q}}f_k^r(e)$, for $r\in[d]$, defines a monotone non-decreasing function on $\cE$. We call such a function a ``profile" defined by the optimal solution in group $\cI^q$. For $p\in[P_r]$, let $(h^*)^{q,p,r}=\max_{e\in\cE^r_p}(f^*)^{q,r}(e)$ be the peak demand defined by the optimal solution (from group $q$) within the interval $\cE_p^r$. 

For $q\in \cQ$, let $(\cL^*)^q:=\{k\in \cI^q\cap\cT^*:~\underline{f}_k^{p,r}>\epsilon^2(h^*)^{q,p,r}\text{  for some } p\in[P_r],~ r\in[d]\}$ be the set of ``large" demands within group $\cI^{q}$ in the optimal solution, and let $\cS^q:=\cI^q\cap\cT^*\setminus(\cL^*)^q$ be the set of ``small" demands within the same group. Note that, by definition of $(h^*)^{q,p,r}$ and the monotonicity of $f_k^r(\cdot)$,  there cannot be more than $\frac{1}{\epsilon^2}$ demands $k$ in $\cI^q\cap \cT^*$ such that $\underline{f}^{p,r}_k>\epsilon^2(h^*)^{q,p,r}$, and hence $|(\cL^*)^q|\le \frac{\sum_{r=1}^dP_r}{\epsilon^2}$. 
The situation with small demands is more complicated as their number in the optimal solution can be high. However, with a small loss in the objective value, the profile defined by such small demands can be restricted  into one that admits a small description. This motivates the following definition (generalizing that of in \citep{BCES06}). 
\begin{definition}\label{d3.3}\emph{($(h,\epsilon)$-restricted profile)}
	Let $\epsilon>0$ be such that $1/\epsilon\in\ZZ_+$.
	For $r\in [d]$ and $p\in[P_r]$, let $h=(h^{p,r})_{p\in[P_r],~r\in[d]}$ be a given vector of numbers such that $h^{p,r}\in F^r$ and $h^{p,r}\ge h^{p-1,r}$, for all $p=2,\ldots,P_r$ and $r\in[d]$. 
	An $(h,\epsilon)$-restricted profile $g=(g^r)_{r\in[d]}$ is vector of monotone functions $g^{r}:\cE\to\RR_+$ such that 
$g^r(e)\in\{l\epsilon h^{p,r}:~l\in\{0,1\ldots,1/\epsilon\},~p\in[P_r]\}$ (see Figure~\ref{f1} in Section~\ref{lemma2b} for pictorial interpretation of an $(h,\epsilon)$-restricted profile).
\end{definition}
Accordingly, the total number of $(h,\epsilon)$-restricted profiles is at most $m^{\sum_{r=1}^dP_r/\epsilon}$. For $q\in\cQ$ and for $p\in[P_r]$, define
\begin{align}\label{hpr}
H^{q,p,r}:=\sum_{t=1}^{T_r}\frac{b^{r,t}\big(e_{\overline i(p,r)}\big)}{2^{q_{r,t}}L}.
\end{align}
Note that 
\begin{align}
\forall p\in[P_r]:~H^{q,p,r}>0~&\Leftrightarrow~\exists t\in[T_r]:~q_{r,t}\ne\infty \nonumber\\
&\Leftrightarrow~ \forall k\in\cI^q~\exists t\in[T_r]:~a_k^{r,t}>0~\nonumber\\
&\Leftrightarrow~ \forall k\in\cI^q:~f_k^r(e_n)>0. \label{p1}
\end{align}

Let $\cH^q:=\{r\in[d]:~H^{q,P_r,r}>0\},$ and $\alpha:=\frac{\sum_{r=1}^dP_r}{\sum_{r\in\cH^{q}}P_r}.$ Assume $\cH^q\neq \emptyset$ since otherwise, $f_k^r(e_n)=0$ for all $k\in\cI^q$ and hence all the users in $\cI^q$ can be taken in the solution without affecting the constraints.

In proving Theorem~\ref{th:mcqptas-}, we shall resort to the below Lemma, which builds on top of the findings in~\citep{BCES06} and is proved in Section~\ref{lemma2b}.

\begin{lemma}\label{l4}
	Fix $q\in\cQ$ and $\epsilon\in(0,1)$.
	Let $\cS^q\subseteq\cI^q$ be  a set of demands within group $q$ such that $\underline{f}_k^{p,r}\le B^{p,q,r}$ for all $k\in\cS^q$, $p\in[P_r]$, $r\in[d]$, and some numbers $B^{p,q,r}\in\RR_+$. Let $h^q=(h^{q,p,r})_{p\in[P_r],~r\in[d]}$ be a given vector of numbers such that $h^{q,p,r}\in F^r$ and $h^{q,p,r}\ge h^{q,p-1,r}$, for all $p=2,\ldots,P_r$ and $r\in[d]$, {and $(\tilde{x}_k)_{k\in\cS^q}\in[0,1]^{\cS^q}$ be such that}
	\begin{align}	\label{rel0}
	\sum_{k\in\cS^q}\overline f_k^{p,r}\tilde x_k\le(1+\epsilon)h^{q,p,r},\quad\forall p\in[P_r],~\forall r\in[d].
	\end{align}
	 Then we can find in polynomial time an integral vector $(\hat x_k)_{k\in\cS^q}\in\{0,1\}^{\cS^q}$ and {an $(h,\epsilon)$-restricted} profile $g^{q}$, such that
	 ~\\
	
	\noindent (i) $\sum_{k\in\cS^q}f_k^r(e)\hat x_k\leq g^{q,r}(e)\leq\sum_{k\in\cS^q}f_k^r(e)\tilde{x}_k$ for all $e\in\cE,~r\in[d]$, and 
	
	\noindent (ii) $\sum\limits_{k\in\cS^q}u_k\hat x_k\geq  \sum\limits_{k\in\cS^q}u_k\tilde x_k-\sum\limits_{r\in\cH^q}\left(\sum\limits_{p=1}^{P_r}\left(\frac{C_r}{H^{q,p,r}}\left(\epsilon h^{q,p,r}+B^{q,p,r}\right)\right) + \frac{\alpha P_rB^{q,P_r,r}}{\epsilon H^{q,P_r,r}}\right).$ 
\end{lemma}
In other terms, Lemma~\ref{l4} establishes that, when all demands are small, a given fractional solution $\tilde x$ for separable $d$-USFP can be rounded to an integral solution $\hat x$ that fits within a capacity profile with a small description, losing only a small part of the utility of $\tilde x$.

\subsection{Approximation Scheme} 

The featured QPTAS, formally stated in Alg.~\ref{alg:qptas}, proceeds as follows. As $\OPT\ge u_{\max}$, by restricting the set of demands to  $\hat \cI$ (defined in Section~\ref{sec:structure}) we lose only a value of at most $\epsilon\OPT$ from the optimal solution. Next, the algorithm discretizes the instance and partitions the users in $\hat\cI$ into $Q$ groups $(\cI^q)_{q\in\cQ}$, as described in Section~\ref{sec:structure}. Additionally, Alg.~\ref{alg:qptas} partitions $\cE$ into intervals $\cE^r$ satisfying  assumption~\raf{property}, as per Lemma~\ref{l5} (with $C_r=2$). 

\begin{algorithm}[!htb]
	\caption{ {\sc $d$-USFP-QPTAS}  }
	\label{alg:qptas}
	\begin{algorithmic}[1]
		\Require An approximation parameter $\epsilon\in(0,1)$; separable {\sc $d$-USFP} input $(f_k^{r})_{k\in \cI,~r\in[d]}$ satsifying~\raf{funct}; {capacities $(c^r)_{r\in[d]}$}
		\Ensure	An integral solution $\hat x$ to $d$-USFP such that $\sum_{k\in\cI}u_k\hat x_k \ge (1-O(\epsilon)) \OPT$
		
		\For {{each selection $\Big(\cL = (\cL^q)_{q\in\cQ}, h = \big(h^q=(h^{q,p,r})_{p\in[P_r],~r\in[d]}\big)_{q\in\cQ}\Big)$ such that} $\cL^q\subseteq \cI$, $|\cL^q| \le \frac{\sum_{r=1}^dP_r}{\epsilon^2}$  and $h^{q,p,r}\in F^r$ } \label{alg:guess} 
		\If{{$\sum_{k\in\cL}f_k^r(e)$}$+\sum_{p\in[P_r],~q\in\cQ}h^{q,p,r}\le c^r(e)$ $\forall e\in\cE,~r\in[d]$} \label{qptas:mc-feas}
		\State {$\hat x_k'\gets1$ $\forall~ k\in \cL$}
		\For{$q\in\cQ$}\label{qptas:Q}
		\State Let $\cS^q$ be given by~\raf{small}
		\For {every $(h,\epsilon)$-restricted profile $g^q$}\label{qptas:prof}
		\State$ (\hat x_k')_{k\in \cS^q} \leftarrow $ Integral vector returned by applying Lemma~\ref{l4}  with vector $h^q$, {and $(\tilde x_k)_{k\in \cS^q}=(x'_k)_{k\in \cS^q}$} \label{qptas:mc-round}
		\EndFor
		\EndFor
		\If{$\sum_{k\in\cI}u_k\hat x_k' > \sum_{k\in\cI}u_k\hat x_k$ }
		\State $ \hat x\leftarrow \hat x'$
		\EndIf 
		\EndIf
		
		\EndFor
		\State \Return $\hat x$
	\end{algorithmic}
\end{algorithm}

Then for each group $q\in\cQ$, Alg.~\ref{alg:qptas} guesses the set of large demands $\cL^q\subseteq\cI^q$ in the optimal solution, and the peaks $h^{q,p,r}$, within $1+\epsilon$, of the small demands in the optimal solution within the interval $\cE_p^r$. Let $\cL=(\cL^q)_{q\in\cQ}$ and $h^q=(h^{q,p,r})_{p\in[P_r],~r\in[d]}$ where $h^{q,p,r}\in F^r$. Define the set of small demands within group $q\in\cQ$ as
\begin{align}\label{small}
\cS^q:=\left\{k\in\cI^q:~\underline{f}_k^{p,r}\le B^{q,p,r}\text{ for all } p\in[P_r],~r\in[d]\right\},
\end{align}
where $B^{q,p,r}:=\epsilon^2\left[h^{q,p,r}+\sum_{k\in\cL^q}\overline{f}_k^{p,r}\right]$.

Let $T:=\max_{r}T_r$ and $M:=\max\left\{\max_{r}\frac{\overline a^r}{\underline a^r},\max_{r}\frac{\overline b^r}{\underline b^r}\right\}.$
\begin{theorem}
	\label{th:mcqptas-}
	For any fixed $\eps\in(0,1)$, Alg.~\ref{alg:qptas} attains a $(1- \eps)$-approximation for separable $d$-USFP in time $\left(\frac{nm\log (dnTM)}{\eps}\right)^{dT\cdot O\left(\log  \frac{dnTM}{\eps}\right)^{dT}}$. 
\end{theorem}

\begin{proof}
	Let $\epsilon:=\frac{\eps}{2\beta+1}$, where $\beta=\max_{r\in\cH^q}2\left(2C_r+\alpha P_r\right)=O(d^3(T\log M)^2)$. 
	The number of possible choices for each $\cL^q$ in step~\ref{alg:guess} of Alg.~\ref{alg:qptas} is at most $n^{\sum_{r=1}^dP_r/\epsilon^2}$. Thus, using $Q\leq \overline Q^{\sum_{r=1}^dT_r}$, and $\overline Q=O(\log\frac{nM}{\epsilon})$, the number of possible choices for $\cL$ is at most \begin{align}\label{ch-L-}
	n^{\sum_{r=1}^dP_rQ/\epsilon^2}\le n^{\sum_{r=1}^dP_r\overline Q^{{\sum_{r=1}^dT_r}}/\epsilon^2}= n^{dT\log M\cdot O(\log  \frac{nM}{\epsilon})^{dT/\epsilon^2}}.
	\end{align} The number of choices for each $h^q=(h^{q,p,r})_{p\in[P_r],~r\in[d]}$ is $$\overline F^{\sum_{r=1}^dP_r}= O\Big(\big(\frac{\log(nTM)}{\epsilon}\big)^{dT\log M}\Big)\,,$$ and the number of choices for $Q$ in step~\ref{qptas:Q} is 
	\begin{align}\label{ch-Q-}
	\overline{Q}^{\sum_{r=1}^d {T}_r}\le O\left(\log  \frac{nM}{\epsilon}\right)^{dT},
	\end{align}
	giving at most
	\begin{align}
	\label{ch-h-}
	\left(O\left(\frac{\log(nTM)}{\epsilon}\right)^{dT\log M}\right)^{ Q}
	=\left(O\left(\frac{\log(nTM)}{\epsilon}\right)^{dT\log M}\right)^{O\left(\log  \frac{nM}{\epsilon}\right)^{dT}}
	\end{align} choices for $h=(h^q)_{q\in\cQ}$ in step~\ref{alg:guess}.
	The number of choices for the $\epsilon$-restricted profiles in step~\ref{qptas:prof} is bounded from above by
	$
	m^{\sum_{r=1}^dP_r/\epsilon}=m^{O(\frac{dT\log{M}}{x\epsilon})}.
	$	 The bound on the running time of Alg.~\ref{alg:qptas} follows from this and~\raf{ch-L-},\raf{ch-Q-},\raf{ch-h-}. 
	
	We now argue that the solution $\hat x$ outputted by Alg.~\ref{alg:qptas} is $(1-O(\epsilon))$-approximation for separable $d$-USFP. 
	Let $x^*$ be an optimal solution for {\sc $d$-USFP} of objective value $\OPT \triangleq \sum_{k\in\cI}u_kx^*_k$. By the definition of $\hat\cI$, we have
	\begin{align}\label{en0-}
	\sum_{k\in\cI\setminus\hat\cI}u_k\le\epsilon\OPT.	
	\end{align}
	Define  $\cT^*\triangleq\{k \in \hat\cI \mid x^*_k = 1\}$ and $(h^*)^{q,p,r}=\sum_{k\in\cT^*\cap\cI^q}\overline f_k^{p,r}$, for $p\in[P_r],$ $r\in[d]$ and $q\in\cQ$.  
	Let $(\cL^*)^q:=\Big\{k\in \cI^q\cap\cT^*:~\underline{f}_k^{p,r}>\epsilon^2(h^*)^{q,p,r}\text{  for some } p\in[P_r],~\text{and some } r\in[d]\Big\}$ be the set of ``large'' demands within group $\cI^{q}$ in the optimal solution, and let $(\cS^*)^q:=\cI^q\cap\cT^*\setminus(\cL^*)^q$ be the set of ``small" demands within the same group. Note by this definition that $|(\cL^*)^q|\le\frac{\sum_{r=1}^dP_r}{\epsilon^2}$, and thus $\cL^*=((\cL^*)^q)_{q\in\cQ}$ and $h=(h^q)_{q\in\cQ}$ will be one of the guesses considered by the algorithm in step~\ref{alg:guess}. Let us focus on this particular iteration of the loop in step~\ref{alg:guess}. {Let $h^{q,p,r}=(1+\epsilon)^{\underline{\ell}}\underline{f}^{r}$, where $\underline{\ell}$ is the smallest integer (including $-\infty$) such that  $h^{q,p,r}+\sum_{k\in(\cL^*)^q}\overline f_k^{p,r}\ge  (h^*)^{q,p,r}$.} Note that $h^{q,p,r}\in F^r$,  and
	\begin{align}\label{rel-}
	\frac{1}{1+\epsilon}h^{q,p,r}+\sum_{k\in(\cL^*)^q}\overline f_k^{p,r}\le (h^*)^{q,p,r}\le h^{q,p,r}+\sum_{k\in(\cL^*)^q}\overline f_k^{p,r}.
	\end{align}
	
	Note that for any $k\in(\cS^*)^q$, $q\in\cQ$, $p\in[P_r]$, and $r\in[d]$, we have by \raf{rel-},
	$$
	\underline f_k^{p,r}\le\epsilon^2 (h^*)^{q,p,r}\leq\epsilon^2\left(h^{q,p,r}+\sum_{k\in(\cL^*)^q}\overline f_k^{p,r}\right),
	$$
	and hence $(\cS^*)^q\subseteq\cS^q$. Note also that 
	\begin{align}
	\label{bbb2-}
	B^{q,p,r}&=\epsilon^2\left[h^{q,p,r}+\sum_{k\in(\cL^*)^q}\overline{f}_k^{p,r}\right] \nonumber \\
	&\le\epsilon^2\left[h^{q,p,r}+(1+\epsilon)\sum_{k\in(\cL^*)^q}\overline{f}_k^{p,r}\right]\le \epsilon^2(1+\epsilon)(h^*)^{q,p,r}.
	\end{align}
	For each $q\in\cQ$, there is an $(h,\epsilon)$-restricted profile $g^q$ and an integral solution $ (\hat x_k')_{k\in \cS^q}  $ that satisfy Lemma~\ref{l4} (applied with $\hat x\leftarrow \hat x'$ and $\tilde x\leftarrow x^*$). Since all the possible $(h,\epsilon)$-restricted profiles are probed, the profile $g^q$ will be identified in one of the iterations of the loop in step~\ref{qptas:prof} of Alg.~\ref{alg:qptas}. Let us consider this iteration.
	By condition~(ii) of Lemma~\ref{l4} and \raf{bbb2-}, 
	{\footnotesize
		\begin{align}\label{en1-}
		\sum_{k\in\cS^q}u_k\hat x_k'&\geq  \sum_{k\in\cS^q}u_kx_k^*-\sum_{r\in\cH^q}\left(\sum_{p=1}^{P_r}\left(\frac{C_r\left(\epsilon h^{q,p,r}+B^{q,p,r}\right)}{H^{q,p,r}}\right)+\frac{\alpha P_rB^{q,P_r,r}}{\epsilon H^{q,P_r,r}}\right)\nonumber\\
		&=\sum_{k\in\cS^q}u_kx^*_k- \sum_{r\in\cH^q}\left(\sum_{p=1}^{P_r}\left(\frac{C_r\epsilon (1+\epsilon)^2(h^*)^{q,p,r}}{H^{q,p,r}}\right)+\frac{\alpha P_r\epsilon^2(1+\epsilon)(h^*)^{q,P_r,r}}{\epsilon H^{q,P_r,r}}\right)\nonumber\\
		&=\sum_{k\in\cS^q}u_kx^*_k-\epsilon(1+\epsilon)\sum_{r\in\cH^q}\left(\sum_{p=1}^{P_r}\left(\frac{C_r(1+\epsilon)(h^*)^{q,p,r}}{H^{q,p,r}}\right)+\frac{\alpha P_r(h^*)^{q,P_r,r}}{H^{q,P_r,r}}\right).	
		\end{align}
	}%
	On the other hand, for $k\in\cS^q$ and $r\in[d]$ such that $f_k^r(e_n)>0$ (and hence $H^{q,p,r}>0$ for all $p\in[P_r]$ by \raf{p1}), we have $u_k\ge 2^{q_{r,t}-1}La_k^{r,t}$  and thus 
	\begin{align}
	\label{bb2-}
	u_k\frac{b^{r,t}(e_{\overline{i}(p,r)})}{2^{q_{r,t}-1}L}\ge a_k^{r,t} b^{r,t}(e_{\overline{i}(p,r)})\ge a_k^{r,t} \tilde b^{r,t}(e_{\overline{i}(p,r)}).
	\end{align}
	Summing up \raf{bb2-} over $t\in[T_r]$, we get $u_k\ge \frac{\overline f_k^{p,r}}{2H^{q,p,r}}$. {Recall that $(h^*)^{q,p,r}=\sum_{k\in\cT^*\cap\cI^q}\overline f_k^{p,r}$,} then summing this inequality over $k\in\cT^*\cap\cI^q$ yields
	\begin{align}
	\label{LB-}
	\OPT^q:=\sum_{k\in\cT^*\cap\cI^q}u_k\ge \sum_{k\in\cT^*\cap\cI^q}\frac{\overline f_k^{p,r}}{2H^{q,p,r}}=\frac{(h^*)^{q,p,r}}{2H^{q,p,r}}.
	\end{align} 
	Summing~\raf{LB-}, over $r\in\cH^q$ and $p\in[P_r]$ gives
	{
	\begin{align}
	\label{LA2-}
	\OPT^q&\ge\sum_{r\in\cH^q}\sum_{p=1}^{P_r}\frac{(h^*)^{q,p,r}}{2H^{q,p,r}} \nonumber \\
	&=\frac1{\beta}\cdot\sum_{r\in\cH^q}\sum_{p=1}^{P_r}\frac{\beta(h^*)^{q,p,r}}{2H^{q,p,r}} \nonumber \\ 
	&\geq\frac1{\beta}\cdot\sum_{r\in\cH^q}\sum_{p=1}^{P_r}\frac{2\left(2C_r+\alpha P_r\right)(h^*)^{q,p,r}}{2H^{q,p,r}} \nonumber \\ 
	&\geq\frac1{\beta}\cdot\sum_{r\in\cH^q}\sum_{p=1}^{P_r}\Bigg(\frac{2C_r(h^*)^{q,p,r}}{H^{q,p,r}} + \frac{\alpha P_r(h^*)^{q,p,r}}{H^{q,p,r}} \Bigg)\nonumber \\ 
	&\geq\frac1{\beta}\cdot\sum_{r\in\cH^q}\Bigg(\sum_{p=1}^{P_r}\frac{(1+\epsilon)C_r(h^*)^{q,p,r}}{H^{q,p,r}} + \sum_{p=1}^{P_r}\frac{\alpha P_r(h^*)^{q,p,r}}{H^{q,p,r}} \Bigg)\nonumber \\ 
	&\ge\frac1{\beta}\cdot\sum_{r\in\cH^q}\left(\sum_{p=1}^{P_r}\left(\frac{C_r}{H^{q,p,r}} (1+\epsilon)(h^*)^{q,p,r}\right)+\frac{\alpha P_r(h^*)^{q,P_r,r}}{H^{q,P_r,r}}\right)\,,
	\end{align}}\noindent{where $\beta=\max_{r\in\cH^q}2\left(2C_r+\alpha P_r\right)$ as defined previously.}
	Thus, it follows from~\raf{en1-} and~\raf{LA2-} that 
	\begin{align}\label{en2-}
	\sum_{k\in\cS^q}u_k\hat x_k'&\geq  \sum_{k\in\cS^q}u_kx_k^*-\epsilon(1+\epsilon)\beta\OPT^q\geq  \sum_{k\in(\cS^*)^q}u_kx_k^*-\epsilon(1+\epsilon)\beta\OPT^q. 
	\end{align}
	Summing~\raf{en2-} over all $q\in\cQ$ and using~\raf{en0-} and \raf{en2-} gives
	\begin{align*}
	\sum_{k\in\cI}u_k\hat x_k'&=\sum_{q\in\cQ}\left(
	\sum_{k\in(\cL^*)^q}u_k\hat x_k'+\sum_{k\in\cS^q}u_k\hat x_k'\right)\\
	&\ge \sum_{q\in\cQ}\left(
	\sum_{k\in(\cL^*)^q}u_kx_k^*+\sum_{k\in(\cS^*)^q}u_kx_k^*-\epsilon(1+\epsilon)\beta\OPT^q\right)\\
	&=\sum_{k \in\cT^*}u_kx_k^*-\epsilon(1+\epsilon)\beta\sum_{k\in\cT^*}u_k\\
	&=\sum_{k \in \hat\cI}u_kx_k^*-\epsilon(1+\epsilon)\beta\sum_{k\in\cT^*}u_k\\
	&\ge \sum_{k \in \cI}u_kx_k^*-\epsilon(2\beta+1)\OPT= (1-\eps)\OPT.
	\end{align*}
	It follows that the solution $\hat x$ returned by Alg.~\ref{alg:qptas} satisfies $$\sum_{k\in\cI}u_k\hat x_k\ge\sum_{k\in\cI}u_k\hat x_k'\ge(1-\eps)\OPT\,,$$
	thus concluding the proof. \hfill$\blacksquare$
\end{proof}

Note that the running time is quasi-polynomial if $M=2^{\polylog (m,n)}$ and $d=O(1)$, $T=O(1)$.

\section{A Logarithmic approximation for single-source separable $d$-USFP}\label{sec:logarithmic}

Notwithstanding its theoretical appeal, the QPTAS devised in Sec.~\ref{sec:qpt} is computationally prohibitive even for modest problem sizes, hence is of limited practicality. This section presents an efficient logarithmic approximation for single-source separable $d$-USFP$[\cI,c]$ with a running time complexity dominated by solving an LP. Before stating the result formally, we rewrite the problem in a suitable matrix notation and briefly outline the underlying technique. Notice that $d$-USFP$[\cI,c]$ can be cast as a general {\it packing integer program} (PIP) of the form
\begin{align}
\Big(\mathcal{P}\big[(A^r)_{r \in [d]}, u, (c^r)_{r \in [d]}\big]\Big) & \quad \max_{\substack{x}} ~~ u^Tx  &\qquad \qquad\notag \\
\text{s.t.} \ & \quad A^rx\le c^r,~\quad \forall~r\in[d]\label{lg0} &\\
& \quad x \in \{0,1\}^n,\label{lg1}
\end{align}
where $u \in \RR_+^n$ is the utility vector, $c^r \in \RR_+^m$ denotes the edge capacities in dimension $r \in [d]$ and  $A^r \in \RR_+^{m \times n}$ resembles the edge-demand incidence relation for the corresponding dimension $r \in [d]$, with the rows signifying the edges and the columns the demands (i.e., $A_{ik}^r = f_k^r (e_i)$ for $\forall i \in [m], k \in \cI$).

Exploiting the special structure of $\mathcal{P}$ induced by the monotonicity and separability of demands, we develop a simple {\it grouping and scaling} method allowing to reduce the problem to an easier instance with only logarithmically many constraints. Recall that an analogously named technique was derived in~\citep{kolliopoulos2001approximation} for the single-source unsplittable flow problem. Deviating from the setting in~\citep{kolliopoulos2001approximation} of partitioning the instance in the demand space, the proposed approach, instead, decomposes the edges into disjoint segments, each defining a subproblem of $\mathcal{P}$ where each capacity and demand varies within a preset range. These subproblems, after certain alterations, are then reconsolidated, effectively formulating the compacted problem with O$(d\log n)$ number of constraints. It's noteworthy that this reduction subroutine {\it holds irrespective of} the rather restrictive NBA condition, which is stipulated in~\citep{kolliopoulos2001approximation}. Thereafter, invoking the standard {\it randomized rounding} algorithm on the natural LP relaxation of the reduced problem ensures the claimed approximation factor. Formally, the preceding analysis culminates in Theorem~\ref{logap}.

In proving Theorem~\ref{logap}, we capitalize on several established results on randomized rounding and its derandomization (codified in the theorem to follow) as a unified black box technique and thereby omit the intricate particulars.

\begin{theorem}[\citep{Srinivasan1999,Raghavan1987}~] Let $\mathcal{B}$ be a PIP of the form $\max \{ u^Tx : Ax \leq c, x \in \{0,1\}^n\}$, where $A \in [0,1]^{m \times n}$, $u \in [0,1]^n$ and $c \in [1,\infty)^m$ with $\max_j u_j = 1$. Then, there exists an algorithm outputting in deterministic polynomial time a feasible solution to $P$ of value $$\Omega \bigg( \max \bigg\{  \frac{\OPT_L}{m^{1/ \nu}},  \bigg( \frac{\OPT_L}{m^{1/ \nu}} \bigg)^{\frac{\nu}{\nu -1}} \bigg\}\bigg) \, ,$$
where $\OPT_L$ is the optimum of the linear relaxation of $\mathcal{B}$ and $\nu = \min_j c_j$. \label{rnd}
\end{theorem}

\begin{theorem}\label{logproof}There is an O$(d\log n)$-approximation for single-source separable $d$-USFP, provided the edge capacities and demands are bounded by $2^{\polylog(n)}$. \label{logap} 
\end{theorem}

\begin{proof} Let $\Lambda = \Big( (A^r)_{r \in [d]}, u, (c^r)_{r \in [d]} \Big)$ be an input instance of $\mathcal{P}$ with $\OPT$ denoting the value of its optimal solution $x^*$. From $\Lambda$, construct an augmented instance  $\Lambda^{\prime} = \Big( \big(\begin{bmatrix}
	A^r&c^r
	\end{bmatrix}\big)_{r \in [d]}, (u, 0), (c^r)_{r \in [d]} \Big)$, which essentially models the outcome of incorporating {\it a dummy request} with a utility of $0$ and a demand equal to edge capacities. This auxiliary step, meant to streamline the proof, incurs no loss of generality as neither $x^*$ nor its structure is affected in the aftermath. Thus, to elude cumbersome notation, $(A^r)_{r \in [d]}$ and $u$ are hereafter assumed implicitly of the augmented form as in $\Lambda^{\prime}$.
	
	At a loss of only a constant factor in $\OPT$, we shall now transform $\mathcal{P}$ to a problem with O$(d\log n)$ constraints. Let $\Pi$ denote the LP relaxation of $\mathcal{P}$, obtained by allowing $x$ to lie in $[0,1]^n$. Fix a constant $C > 1$, along with the corresponding partitions $(\cE^r)_{r \in [d]}$ guaranteed by Lemma~\ref{l5}, and denote by $A^{r, p}$ the submatrix of $A^r$ restricted to the rows in $\big\{i \in [m] \mid e_i \in \cE_p^r\big\}$. Observe that each interval $\cE^r_p$ in $(\cE^r)_{r \in [d]}$ naturally defines a subproblem $\Pi\big[A^{r,p}, u, c^{r,p}\big]$, where $\frac{\max_{i}A_{ij}^{r,p}}{\min_{i:A_{ij}^{r,p}>0}A_{ij}^{r,p}} \leq C$ for $\forall j \in [n+1]$ and, by introduced ancillary demands, $\frac{\max_{i}c_i^{r,p}}{\min_{i:c_i^{r,p}>0}c_{i}^{r,p}} \leq C$. Given $\Pi\big[A^{r,p}, u, c^{r,p}\big]$, compose a simplified instance $\Pi\big[\overline A^{r,p}, u, \underline c^{r,p}\big]$, with $\underline c^{r,p} := \min_i c^{r,p}_i \cdot \boldsymbol1  $ and $\overline A^{r,p}$ standing for the matrix whose $i,j$-th entry equals $\max_{i}A_{ij}^{r,p}$ if $A_{ij} \neq 0$ and $0$ otherwise. In a sense, this amounts to setting each demand to its maximum, therein flattening out the step functions into lines, and uniforming the edge capacities across the interval. Consider an optimal solution $y^*$ of $\Pi\big[A^{r,p}, u, c^{r,p}\big]$ and set $\tilde {y} := \frac{y^*}{C^2}$. As a corollary, $\tilde{y} $ becomes a feasible solution for $\Pi\big[\overline A^{r,p}, u, \underline c^{r,p}\big]$. On the other hand, any feasible solution to $\Pi\big[\overline A^{r,p}, u, \underline c^{r,p}\big]$ translates into that of $\Pi\big[A^{r,p}, u, c^{r,p}\big]$ of the same value. Taken together and generalized over all the partitions, these observations imply that
	\begin{equation}
	\widetilde {\OPT}_{\Pi} \geq \frac{\OPT_{\Pi} }{C^2} \geq \frac{\OPT }{C^2} \,, \label{eq63}
	\end{equation}
	where $\widetilde{\OPT}_{\Pi}$ and $ \OPT_{\Pi} $ are the optimal objective values of $\Pi\big[(\overline A^r)_{r \in [d]}, u,$ $ (\underline c^r)_{r \in [d]}\big]$ and $\Pi\big[(A^r)_{r \in [d]}, u$ $, (c^r)_{r \in [d]}\big]$, respectively. Furthermore, a finer inspection of the former problem can render the majority of its constraints redundant. Indeed, by construction, each subproblem $\Pi\big[\overline A^{r,p}, u, \underline c^{r,p}\big]$ of $\Pi\big[(\overline A^r)_{r \in [d]}, u, $ $(\underline c^r)_{r \in [d]}\big]$ boils down to a {\it single Knapsack inequality}\footnote{This inequality is captured by the first constraint appearing in the subproblem, and thus can be extracted in O$(1)$ time.} since both, demands and capacities, are levelled therein, and all the requests share the same origin. Compounding these $\tilde{m} = O(d\log n)$ inequalities into $\tilde{A} \in \RR_+^{\tilde{m}\times n+1}$ and $\tilde{c} \in \RR_+^{\tilde{m}\times 1}$, formulate a new PIP $\mathcal{P}\big[\tilde{A} , u, \tilde{c}\big]$ minding that $\widetilde{\OPT}_{\Pi}$  is the optimum value of its linear relaxation.

	Henceforth, it remains to invoke Theorem~\ref{rnd} on $\mathcal{P}\big[\tilde{A} , u, \tilde{c}\big]$ after some proper scaling. In particular, without loss of generality, assume for $\forall i,j$, $\tilde{A}_{i,j} \leq \tilde{c}_i$ since otherwise we might as well set the corresponding $j$-th decision variable to $0$. This being so, scale down each row $i$ of $\tilde{A}$ and $\tilde{c}$ by $\max_j \tilde{A}_{i,j}$, consequently letting $\tilde{A} \in [0,1]^{\tilde{m}\times n+1}$ and $\tilde{c} = \boldsymbol1 $ (due to the dummy requests). Next, scaling $u$ such that $\max_j u_j = 1$,  conforms $\mathcal{P}\big[\tilde{A} , u, \tilde{c}\big]$ to the form in Theorem~\ref{rnd}. Accordingly, we obtain a feasible integral solution to $\mathcal{P}\big[\tilde{A} , u, \tilde{c}\big]$, and hence to $\mathcal{P}\big[(A^r)_{r \in [d]}, u, (c^r)_{r \in [d]}\big]$, of value $ \frac{\widetilde {\OPT}_{\Pi}}{O (d\log n)}$, which together with~\raf{eq63} yields the theorem. \hfill$\blacksquare$
\end{proof}

\vspace*{8pt}
\begin{remark}
	For the sake of variety, the result in this section was provided in an existential form, rather than in an algorithmic variant as in Section~\ref{sec:qpt}. However, the algorithm is straightforward and follows immediately from the proof. Also, it should be noted that, at an additional loss of O$(\log n)$ factor, one can possibly extend this result to separable $d$-USFP through the approach in~\citep{BFKS14} of decomposing the given instance into one in which all the demands intersect.
\end{remark}

\section{From Unsplittable Flows to Electrical Flows: Application to Power Systems}\label{elecflows}

In this section, we develop a reduction procedure that can be applied to LP-based approximations for separable $d$-USFP to produce approximations for AC OPF on line distribution networks. To this end, Section~\ref{oppf} first outlines the pertinent background on OPF and formulates the problem mathematically, then Section~\ref{sec:pre} expounds the proposed reduction.
 



\subsection{AC OPF and its Exact Relaxation for Radial Networks}\label{oppf}

The AC OPF problem, introduced by Carpentier in 1962~\citep{C62}, lies at the heart of techniques routinely deployed in power systems for performance optimization and control (see e.g., \citep{F12a} for a comprehensive survey on OPF). As such, the input of OPF comprises an electrical network, such as the one depicted in Fig.~\ref{f2}, represented by an undirected graph where nodes stand for \textit{electric buses}, whereas the edges model \textit{power lines}. Among the buses, some correspond to AC generators while others to demand nodes (loads). The objective is to determine an operating point, optimal with respect to a given objective (e.g., minimizing generation cost), that satisfies user demands while meeting operational (engineering) constraints (e.g. line thermal limit) and physical properties (imposed by Ohm’s and Kirchoff’s laws) of the electrical network.

From computational perspective, OPF is notoriously toilsome due mainly to the existence of {\em non-convex} constraints involving complex-valued entities of power system parameters such as current, voltage and power. Recently, there has been a major progress on tackling OPF through convex relaxations~\citep{BGCL15,huang2017sufficient,gan2015exact,low2014convex1,low2014convex2}. These papers focus chiefly on {\it radial} (i.e., tree) networks, since they are fairly common in real-world, and derive sufficient conditions under which the convex relaxation is exact (i.e., equivalent to the original non-convex problem); for example, relaxing the rank-$1$ constraint in the semidefinite programming (SDP) formulation \citep{BGCL15}, or relaxing the equality constraints in the second order cone programming (SOCP) formulation \citep{huang2017sufficient,gan2015exact,low2014convex1,low2014convex2}. While these results yield polynomial time algorithms for OPF, their scope is limited to the case with {\it continuously adjustable} power injection constraints; control variables responsible for modulating power loads are fractional and defined in terms of buses). In a more general setting, however, it is often necessary to account for {\it discrete} (or a mix of discrete and continuous) variables~\citep{6629395,7433473,9301221, 8892494}. Specifically, certain loads and devices, e.g., TV, vacuum cleaner or washing machine, operate only under a particular supply of electricity; are either switched on with a fixed power consumption rate or turned off. This \textit{combinatoric structure} renders a substantially more complicated instance of OPF. Concretely, as demonstrated in~\citep{khonji2016optimal}, OPF with discrete demands in a {\it delta} network is hard to approximate within any polynomial guarantees unless P=NP. Prior studies on OPF with discrete control variables, e.g.,~\citep{7926079, 4578739, Hijazi2017}, mainly resort to heuristic techniques, which, per se, are devoid of any optimality guarantees or theoretical guidance.

\begin{figure}
		\centering
		\includegraphics[width=0.6\textwidth]{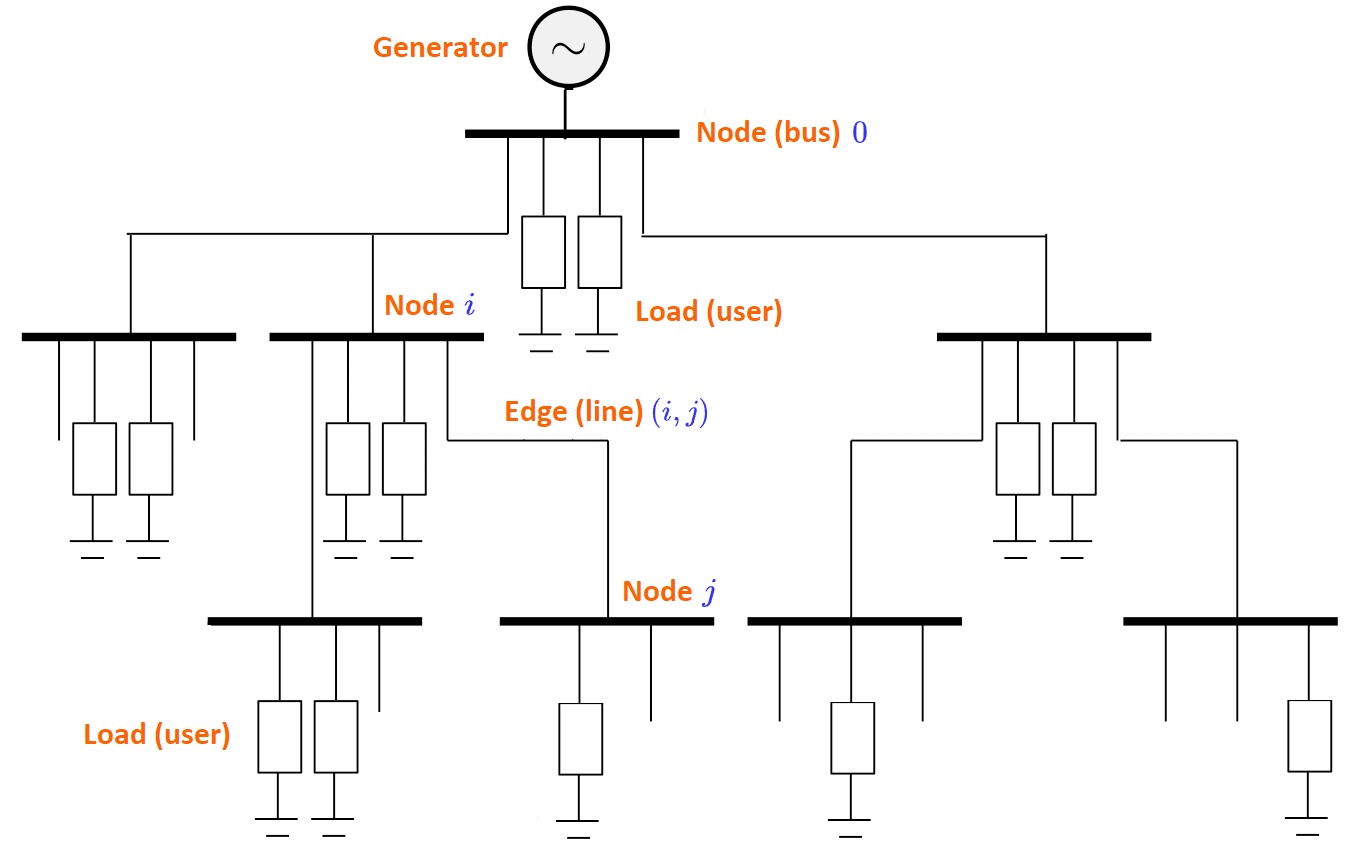}
		\caption{An example of a radial electrical network.}
		\label{f2}
\end{figure}

With the above background in view, we next provide a model of an electrical network and define OPF formally. {Recall from the convention in Sec.~\ref{sec:prelim}  that given a complex number $\nu\in\CC$ we let $|\nu|$ be its {\it magnitude}, $\arg(\nu)$ be the {\it phase angle} that it makes with the real axis, $\nu^*$ be its complex {\it conjugate} and write $\nu^{\rm R} \triangleq {\rm Re}(\nu)$, $\nu^{\rm I} \triangleq {\rm Im}(\nu)$ for its real and imaginary components, respectively.} Consider a radial distribution network represented by a line graph $\cT=(\cV,\cE)$, where $\cV=\{0,1,\ldots,m\}$ denotes the electric {\it buses}, whereas $\cE$ symbolizes the distribution {\it lines}. Each line $e\in\cE$ is characterized by a complex {\it impedance} $z_{e}\in\cC$, with a {\it non-negative} real part resembling the {\it resistance} of the line (to the flow of current) and imaginary part quantifying the  {\it reactance} ({\it inductance} if positive and {\it capacitance} if negative). In the setup under study, a {\it substation generator} is attached to the root of $\cT$, node $0$. By convention, it is assumed that power flows from the root to the nodes. Let $\cV^+\triangleq \cV \setminus\{0\}$ and  $\cV_i^+ \triangleq \cV_i\setminus\{i\}$. When referring to an edge, we shall use the (ordered) pair of subscripts $(i,j)$ and $e$ \textit{interchangeably}, where it is assumed that $i$ is the {\it parent} of $j$ in $\cT$.

At each node $j\in\cV^+$, attached is a set $\cU_j$ of {\it users} (electrical loads). Let $\cN \triangleq \cup_{j\in\cV^+} \cU_j$ be the set of all users ($|\cN|=\tilde{ n}$), while $\cN_j$ be those residing in the subpath rooted at node $j\in\cV^+$. Among these users, some have {\it inelastic} (discrete) power demands, denoted by $\cI \subseteq\cN$. A discrete demand is either completely satisfied or dropped. An example is an appliance that is either switched on with a fixed power consumption rate or  switched off. The rest of users, denoted by $\cF \triangleq \cN\backslash\cI$, have elastic demands which can be partially satisfied. The demand of user $k$ is represented by a complex-valued  number $s_k\in \CC$; the real part $s_k^{\rm R}$ denotes the so-called {\it active} power while the imaginary part $s_k^{\rm I}$ captures the {\it reactive} power; the {\it apparent} power is defined as the magnitude $|s_k|=\sqrt{(s_k^{\rm R})^2+(s_k^{\rm I})^2}$ of $s_k$. Additionally, each user $k\in\cI$ is associated with a number $u_k\in\RR_+$ indicating the {\it utility} of user $k$ if her demand $s_k$ is fully satisfied.

Denote the unique path from node $j$ to the root $0$ by $\cP_j$. For each user $k\in \cU_j$, define $\cP_k \triangleq \cP_j$. With a slight abuse of notation, we interchangeably refer as $\cP_j$  to the set of edges as well as the set nodes on the path from $j$ to the root.
\begin{figure}[!t]
	\centering
	\includegraphics[width=\textwidth]{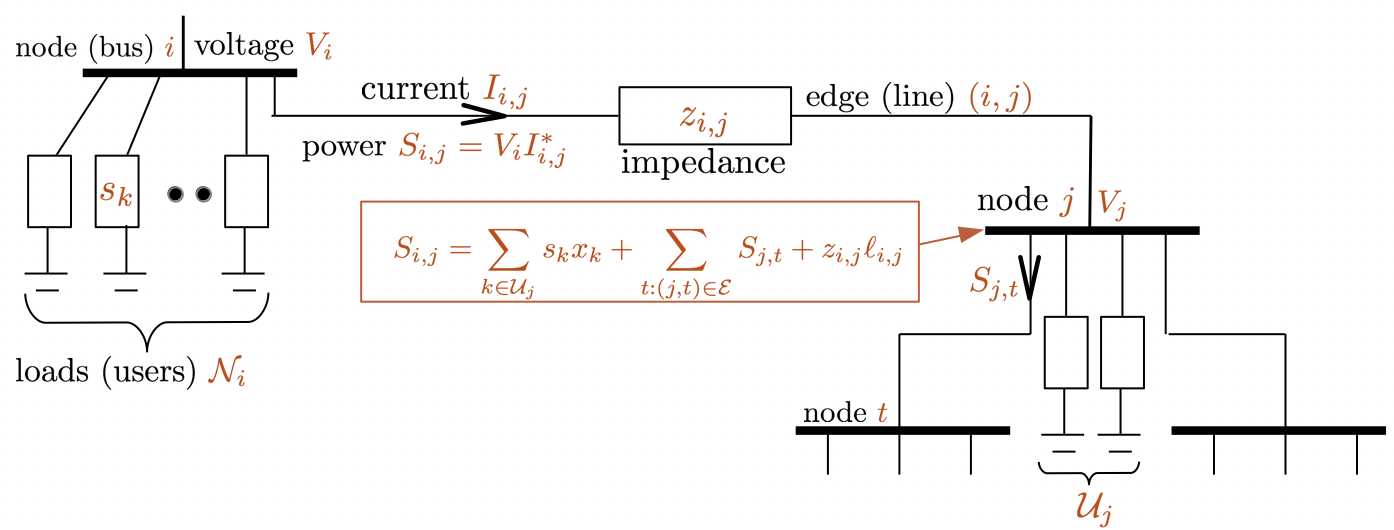}
	\caption{Conservation of power flow  at node $j$.}
	\label{f3}
\end{figure}

A steady-state power flow in a distribution network is generally described by a system of equations. For radial networks (which include paths), these can be framed through the \textit{Branch Flow (a.k.a. DistFlow) Model} (BFM)~\citep{baran19266}. Under BFM, OPF in $\cT$ is embodied by the following mixed-integer programming problem.
\begin{align}
\textsc{Input}:\; \quad & v_0;(\underline v_j, \overline v_j)_{j\in \cV^+}; (\overline S_e, \overline \ell_e, z_e)_{e\in \cE}; (s_k)_{k\in \cN}  \notag\\
\textsc{Output}:\; \quad &s_0 ; (v_j)_{j\in \cV^+}; (S_{e}, \ell_{e})_{e\in \cE}; (x_k)_{k \in \cN} \notag\\
&~\notag \\
\textsc{(OPF)} \quad &\max_{\substack{s_0, x, v,\ell, S \;\;}} f_{\textsc{OPF}}(s_0, x),  \notag \\
\text{s.t.} \ & 	\ell_{i,j} =  \frac{|S_{i,j}|^2}{v_i}, ~~ \forall (i,j) \in \cE  \label{p1:con1} \\
&  S_{i,j}=  \sum_{k \in \cU_j} s_k x_k  + \sum_{t:(j,t)\in \cE} S_{j,t} + z_{i,j}\ell_{i,j}, ~~ \forall (i,j) \in \cE  \label{p1:con2}
\end{align}
\begin{align}
& S_{0,1} =- s_0\label{p1:con3}\\ 
&	v_j = v_i + |z_{i,j}|^2 \ell_{i,j} - 2 \re(z_{i,j}^\ast  S_{i,j}), ~~  \forall (i,j) \in \cE \label{p1:con4} \\
& \underline v_j \le v_j \le \overline v_j,    ~~\forall j \in\cV^+ \label{p1:con5} \\
& |S_e| \le \overline S_e,~|-S_e + z_e\ell_e| \le \overline S_e,  ~~\forall e \in \cE \label{p1:con6}\\
& \ell_e \le\overline \ell_e  ~~\forall e \in \cE \label{p1:con7}\\
& x_k \in \{0,1\},~~\forall k \in \cI,~~   x_k \in [0,1],~~ \forall k \in \cF \label{p1:con8}\\
& v_j \in \RR_+,~\forall  j \in \cV^+ \ \ell_e \in \RR_+,~ S_{e} \in \CC \label{p1:con9},~\forall e\in \cE\,.
\end{align}


\paragraph{\bfseries The variables:} In the above formulation, the complex variable $S_{i,j}$ represents the power output at node $i$ along the edge $(i,j)$, {$z_{i,j}^*$ denotes the \textit{complex conjugate} of $z_{i,j}$}, and $v_j\triangleq |V_j|^2$ and $\ell_e\triangleq|I_e|^2$ define the voltage and current magnitude squares at node $j$ and link $e$, respectively. Note that in BFM phase angles for the voltages and currents, $\arg(V_j)$ and $\arg(I_e)$, are eliminated from the formulation. However, as proved in \citep{FL13a}, this relaxation is exact for radial networks. That is, one can (in polynomial time) uniquely recover the phase angles once a solution to the relaxation is obtained. Finally, each user demand $k\in \cN$ is assigned a control variable $x_k$; if $k \in \cI$, then $x_k\in\{0,1\}$, otherwise,  $x_k \in [0,1]$ for $k \in \cF$. Define  vectors $S\triangleq(S_e)_{e\in\cE}, \ell \triangleq (\ell_e)_{e\in\cE}, x\triangleq(x_k)_{k\in \cN}, v = (v_i)_{i\in \cV^+}$. 

\paragraph{\bfseries The objective:} OPF seeks to assign values to the control vector $x$, complex power vector $S$ as well as current and voltage magnitude vectors $\ell$ and $v$, such that the following {\em concave} non-negative objective function\footnote{Traditionally, the objective is to minimize the generation cost $c(S_{0,1}^{\rm R})$, which is typically a non-decreasing convex function of the active generation power $S_{0,1}^{\rm R}$. In the discrete demand case under study, we combine the minimization of the generation cost with the utility maximization of the satisfied demands by using the function $f_{\textsc{OPF}}(s_0,x)$, where  $f_0(s_0^{\rm R})\triangleq Y-c(S_{0,1}^{\rm R})=Y-c(-s_{0}^{\rm R}))$, for a sufficiently large number $Y$, is a nonnegative concave function, non-decreasing in $s_{0}^{\rm R}).$} 
$$f_{\textsc{OPF}}(s_0, x) = f_0(s_0^{\rm R}) + f_1\big((s_k^{\rm R} x_k)_{k\in\cF} \big) + \sum_{k \in \cI}u_k x_k,$$
is  maximized, without violating the physical and operating constraints described below.


\paragraph{\bfseries The constraints:} Let $\underline v_j, \overline v_j \in\RR^+$ be respectively the minimum and maximum allowable voltage magnitude squares at  node $j$, and $\overline S_{e}, \overline \ell_{e}\in \RR^+$ be the maximum allowable apparent power and current magnitude on edge $e \in \cE$, respectively. 
As customary, it is assumed that the generator voltage $v_0\in \RR^+$ is given. In the above formulation, Eqn.~\raf{p1:con1} is immediate from the definition of the magnitude of the complex power $S_{i,j}=V_iI_{i,j}^*$.  Eqn.~\raf{p1:con2} (in complex variables) captures the power flow conservation rule at node $j$ (see Figure~\ref{f3}). The rule equates the power output at node $i$ along the edge $(i,j)$ minus the power lost on that line ($z_{i,j}\ell_{i,j}=z_{i,j}|I_{i,j}|^2$) to the total power consumed by the loads at node $j$ (namely, $\sum_{k \in \cU_j} s_k x_k$) plus the total power output on the lines outgoing from $j$ (which is $\sum_{t:(j,t)\in \cE} S_{j,t}$).
Eqn.~\raf{p1:con3} is the special case of Eqn.~\raf{p1:con2} applied to node $0$ (assuming an artificial edge $(0,0)$), where the demand $s_0$ is negated to indicate power generation (rather than consumption).  Eqn.~\raf{p1:con4} is a consequence of Ohm's law: $V_i-V_j=z_{i,j}I_{i,j}$, and the definition of power $S_{i,j}=V_iI_{i,j}^*.$ The inequalities in~\raf{p1:con5} and \raf{p1:con7} limit the voltage and current magnitudes at each node and on each line, respectively, to the allowable range. While those in~\raf{p1:con6} cap the apparent power on each link in both directions by the capacity of the link: $|S_{i,j}|\le \overline S_{i,j}$ and $|S_{j,i}|\le \overline S_{i,j}$, where $S_{j,i}=V_jI_{j,i}^*=-V_jI_{i,j}^*=-(V_i-z_{i,j}I_{i,j})I_{i,j}^*=-S_{i,j}+z_{i,j}|I_{i,j}|^2$.

\subsubsection{Assumptions} In tackling OPF, we shall rely on the following practical assumptions.
\begin{itemize}
	\setlength{\itemindent}{.1in}
	\item [{\sf A0:}] $f_0(\cdot)$ is non-decreasing in $s^{\rm R}_0$. {Recall that by definition $f_0(s_0^{\rm R}) = Y-c(-s_{0}^{\rm R}))$, where $c(-s_{0}^{\rm R}) = c(S_{0,1}^{\rm R})$ captures the active power generation cost. As is customary in power systems literature \citep{huang2017sufficient, FL13a, gan2015exact, 6290429}, we treat the generation cost $c(\cdot)$ as a non-decreasing convex function of $S_{0,1}^{\rm R}$. Consequently, one can set $Y$ to be a sufficiently large number such that $f_0(\cdot)$ is non-negative and non-decreasing in $s_{0}^{\rm R}).$}
	
	\item [{\sf A1:}]  $z_e\ge 0$ for all $e\in\cE$, which naturally holds in distribution networks.

	\item[{\sf A2:}]		 $  \underline v_j\le v_0 \le \overline v_j$ for all $j\in\cV^+$. Typically in a distribution network, $v_0$ = 1 (per unit), $\underline v_{j}=(0.95)^2$ and $\overline v_{j}=(1.05)^2$; in other words, a $5\%$ deviation from the nominal voltage is allowed. 
	\item [{\sf A3:}]	$\re(z_{e}^* s_k)\ge 0 \text{ for all } k\in \cN,~e\in \cE.$
	Equivalently, the angle difference between $z_e$ and $s_k$ is at most $ \tfrac{\pi}{2}$.
	
	\item [{\sf A4:}]	$\big|\arg(s_k) - \arg(s_{k'})\big| \le  \tfrac{\pi}{2}$ for any $k, k' \in \cN$. In  practical settings, the so-called {\it load power factor} usually varies between $0.8$ to $1$ \citep{1339347} and thus the maximum phase angle difference between any pair of demands is restricted to be in the range of $[0, 36^{\circ}]$. 
	We also assume $s^{\rm R}_k\ge 0$ for all $k\in \cN$, which always holds in power systems (assuming no power generation at non-root nodes in $\cV^+$).
	
	\item [{\sf A5:}] The range of impedances and demands is quasi-polynomial, that is,
	{\small$$
	 \max\left\{\frac{\max_{e\in\cE}z_{e}^{\rm R}}{\min_{e:z_e^{\rm R}>0}z_{e}^{\rm R}},\frac{\max_{e\in\cE} z_{e}^{\rm I}}{\min_{e:z_e^{\rm I}>0}z_{e}^{\rm I}},\frac{\max_{k\in\cN}s_{k}^{\rm R}}{\min_{k:s_k^{\rm R}>0}s_{k}^{\rm R}},\frac{\max_{k\in\cN}s_{k}^{\rm I}}{\min_{k:s_k^{\rm I}>0}s_{k}^{\rm I}}\right\}=2^{\polylog(m,\tilde{ n})}
	.
	$$}
\end{itemize}
Assumptions {\sf A3}  and  {\sf A4} are motivated, from a theoretical point of view, by the inapproximability results in \citep{khonji2016optimal} (if either one is invalid, the problem cannot be approximated within any polynomial factor unless P=NP). Assumption {\sf A3} holds in reasonable practical settings~\citep{huang2017sufficient}. As clarified in the next subsection, by performing an axis rotation, {\sf A4} implies $s_k\ge 0$. Clearly, under this and {\sf A1}, the reverse power constraint in~\raf{p1:con6} is implied by the forward power one ($|S_e|\le\overline S_e$). Similarly, under {\sf A1},   {\sf A2}  and  {\sf A3}, the voltage upper bounds in \raf{p1:con5} can be dropped, as elaborated in subsection~\ref{dolya}. Lastly, {\sf A5} is required merely for the analysis of the featured approximations and may possibly be bypassed with techniques from~\citep{BGKMW15}.

\subsubsection{Rotational Invariance of {\sc OPF}}\label{dolya}
In the below lemma, it is argued that complex quantities in the OPF formulation (namely, $z_e, s_k$) can be rotated by a fixed angle without affecting the problem's structure. This property allows to replace {\sf A0} and {\sf A4} by the ones listed below.
\begin{itemize}
	\setlength{\itemindent}{.13in}
	\item [{\sf A0$'$}:] $f_0(s_0^{\rm R} \cos \phi + s_0^{\rm I} \sin \phi)$ is  non-decreasing in $s_0^{\rm R}, s_0^{\rm I}$.
	\item [{\sf A4$'$:}] $s_k \ge 0$ for all $k\in \cN$.
\end{itemize}
Note that  {\sf A1}  and {\sf A4$'$} already imply {\sf A3}.
\begin{lemma}\label{lem:rot}
	Assume {\sf A4} and suppose that $s_k$, for all $k\in\cN$,  and $z_e$, for all $e\in\cE$, are rotated by an angle $\phi \triangleq \min\{\max_{k\in \cN} -\arg(s_k),0\}\in[0,\frac{\pi}{2}]$. Denote the resulting {\sc OPF} problem  by {\sc OPF$^\phi$}:
	\begin{align}
	\textsc{(OPF$^\phi$)} \quad &\max_{\substack{s_0, x, v,\ell, S \;\;}} f_{\textsc{OPF}}(s_0e^{-{\bf i}\phi}, x),  \notag \\
	\text{s.t.} \ &\raf{p1:con1}-\raf{p1:con9}, \text{ with $z_e$ replaced by $z_ee^{{\bf i}\phi}$, and $s_k$ replaced by $s_ke^{{\bf i}\phi}$ }	. \notag
	\end{align}
	Then {\sc OPF$^\phi$} is equivalent to {\sc OPF} and satisfies assumptions {\sf A0$'$}, {\sf A1}, {\sf A2}, {\sf A3} and {\sf A4$'$}.
	
\end{lemma}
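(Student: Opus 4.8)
The essential idea is that multiplying every complex quantity by a fixed unit-modulus scalar $e^{{\bf i}\phi}$ induces a bijection between feasible solutions of \textsc{OPF} and feasible solutions of \textsc{OPF$^\phi$} that preserves the objective, and then one simply checks that the rotated data satisfies {\sf A0$'$}--{\sf A4$'$}. So the proof naturally splits into two parts: (i) establishing the equivalence, and (ii) verifying the assumptions for the rotated instance.

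\textbf{Part (i): equivalence.} Given a feasible solution $(s_0,x,v,\ell,S)$ of \textsc{OPF}, I would define a candidate solution of \textsc{OPF$^\phi$} by keeping $x$, $v$, $\ell$ unchanged (these are real, magnitude-type quantities), setting $S_e' \triangleq S_e e^{{\bf i}\phi}$ for every $e\in\cE$, and $s_0' \triangleq s_0 e^{{\bf i}\phi}$. The verification of the constraints is constraint-by-constraint: in \raf{p1:con1}, $|S_{i,j}'|^2 = |S_{i,j}|^2$ so $\ell$ is unaffected; in \raf{p1:con2} and \raf{p1:con3}, every term scales by the common factor $e^{{\bf i}\phi}$ (note $z_{i,j}\ell_{i,j}$ becomes $(z_{i,j}e^{{\bf i}\phi})\ell_{i,j}$ and $s_k x_k$ becomes $(s_k e^{{\bf i}\phi})x_k$), so the equation holds after dividing through by $e^{{\bf i}\phi}$; in \raf{p1:con4}, $\re\bigl((z_{i,j}e^{{\bf i}\phi})^\ast (S_{i,j}e^{{\bf i}\phi})\bigr) = \re(z_{i,j}^\ast S_{i,j})$ since the $e^{{\bf i}\phi}$ factors cancel in the conjugate pairing, and likewise $|z_{i,j}e^{{\bf i}\phi}|^2 = |z_{i,j}|^2$, so \raf{p1:con4} is invariant; the magnitude bounds \raf{p1:con5}--\raf{p1:con7} involve only $|S_e|$, $v_j$, $\ell_e$, all unchanged; and the integrality/box constraints \raf{p1:con8}--\raf{p1:con9} on $x$ are untouched. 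Finally the objective of \textsc{OPF$^\phi$} evaluated at this solution is $f(s_0' e^{-{\bf i}\phi}, x) = f(s_0, x)$, the original objective. The inverse map (multiply by $e^{-{\bf i}\phi}$) is constructed identically, giving a value-preserving bijection, hence the two problems have the same optimal value and, more strongly, equivalent feasible regions.

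\textbf{Part (ii): the rotated data satisfies the assumptions.} Here I would note that $\phi\in[0,\tfrac{\pi}{2}]$ by its definition (it is the nonnegative part of $\max_k -\arg(s_k)$, and under {\sf A4} together with $s_k^{\rm R}\ge 0$ all phases lie in $[-\tfrac{\pi}{2},\tfrac{\pi}{2}]$, so this maximum is at most $\tfrac{\pi}{2}$). For {\sf A4$'$}: each rotated demand has phase $\arg(s_k)+\phi$; by choice of $\phi$ this lands every $\arg(s_k)+\phi$ in $[0, \tfrac{\pi}{2}]$ — the smallest original phase plus $\phi$ is exactly $0$ (or all phases were already nonnegative and $\phi=0$), and the largest is at most the smallest plus $\tfrac{\pi}{2}$ by {\sf A4}, hence at most $\tfrac{\pi}{2}$ after shifting — so $s_k e^{{\bf i}\phi}\ge 0$ coordinatewise. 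Assumption {\sf A1} is untouched: $z_e e^{{\bf i}\phi}\ge 0$? — actually one only needs {\sf A1} to continue holding in the sense used later; here I would instead observe that {\sf A1} together with {\sf A4$'$} implies {\sf A3} (the remark just before the lemma), and check {\sf A1} is preserved under the rotation because $\re(z_e)\ge 0$ and $\arg(z_e)$ shifts by $\phi\le\tfrac{\pi}{2}$ — wait, that could push $\re$ negative, so the cleaner route is: {\sf A1} for distribution networks should be read as an instance hypothesis that transfers, and the lemma's claim is that {\sf A3} holds, which follows from {\sf A1}+{\sf A4$'$} as noted. For {\sf A3} directly: $\arg(z_e e^{{\bf i}\phi}) - \arg(s_k e^{{\bf i}\phi}) = \arg(z_e)-\arg(s_k)$, invariant, so if {\sf A3} held before it holds after. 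Assumption {\sf A2} involves only the real scalars $v_0, \underline v_j, \overline v_j$, untouched. And {\sf A0$'$} is exactly the statement that $f_0$ composed with the real part of the $-\phi$-rotation is monotone, which is the rotated form of {\sf A0} — this is essentially a restatement, using $s_0^{\rm R}\cos\phi + s_0^{\rm I}\sin\phi = \re(s_0 e^{-{\bf i}\phi})$ wait sign-check: $\re(s_0 e^{-{\bf i}\phi}) = s_0^{\rm R}\cos\phi + s_0^{\rm I}\sin\phi$, matching the argument of $f_0$ inside $f(s_0 e^{-{\bf i}\phi},x)$.

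\textbf{Main obstacle.} The arithmetic is all routine; the only place demanding care is the bookkeeping of \emph{which} quantities rotate and which do not — in particular recognizing that $v$, $\ell$, and $x$ are rotation-invariant while $S_e$ and $s_0$ must be rotated, and that this is exactly what makes \raf{p1:con1} and \raf{p1:con4} survive (the $e^{{\bf i}\phi}$ cancels against its conjugate). The other subtlety is confirming $\phi$ is well-defined and lies in $[0,\tfrac{\pi}{2}]$, and that the choice of $\phi$ as $\min\{\max_k -\arg(s_k),0\}$ (note: this is actually $\min$ of a max with $0$, so it equals $0$ when all phases are already $\ge 0$ — I would double check the intended reading is really $\max\{\dots,0\}$ given the stated range $[0,\tfrac{\pi}{2}]$, and proceed with whichever makes $\arg(s_k)+\phi\in[0,\tfrac\pi2]$) precisely realigns the demand cone into the nonnegative quadrant. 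I expect the write-up to be short: one paragraph for the bijection and constraint check, one for the assumption verification.
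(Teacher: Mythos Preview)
Your approach is exactly the paper's: the paper's proof is a two-sentence sketch saying that rotating $S_{i,j}$ and $s_0$ by $e^{-{\bf i}\phi}$ (keeping $x,v,\ell$) gives a bijection between feasible sets with equal objective, and that the assumptions are ``immediate.'' Your Part~(i) spells this out in full and is correct.

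The one place where you hedge and the paper is silent deserves a clean resolution rather than the workaround you propose. You worry that {\sf A1} might fail after rotation because $\arg(z_e)+\phi$ could exceed $\tfrac{\pi}{2}$, and you end up suggesting {\sf A1} ``transfers as an instance hypothesis''---but that is not a proof, and later lemmas (e.g.\ the proof of Lemma~\ref{feas}) genuinely use $z_e\ge 0$ for the rotated $z_e$. The fix is to invoke {\sf A3}: let $k^\star$ be a demand with $\arg(s_{k^\star})=-\phi$ (the minimizer); then {\sf A3} gives $|\arg(z_e)-\arg(s_{k^\star})|\le\tfrac{\pi}{2}$, so $\arg(z_e)\le -\phi+\tfrac{\pi}{2}$, hence $\arg(z_e e^{{\bf i}\phi})\le\tfrac{\pi}{2}$. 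Combined with $\arg(z_e)\ge 0$ from the original {\sf A1} and $\phi\ge 0$, this yields $z_e e^{{\bf i}\phi}\ge 0$. (Your observation that the stated formula $\phi=\min\{\max_k(-\arg s_k),0\}$ cannot lie in $[0,\tfrac{\pi}{2}]$ is correct; the intended definition is the $\max$ with $0$, and you should proceed with that reading.)
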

\begin{proof}
	One can easily show that a feasible solution $F=(s_0,x,v,\ell,S)$ to {\sc (OPF$^\phi$)} can be converted to a feasible solution $\bar{\bar F}=(\bar{\bar s}_0, x,v,\ell,\bar{\bar S})$ to {\sc OPF}, such that $\bar{\bar S}_{i,j}\triangleq S_{i,j} e^{-{\bf i}\phi}, \bar{\bar s}_0\triangleq s_0 e^{-{\bf i} \phi}$ are rotated by $\phi$, and vise versa. 
	Moreover, the two objective functions are equal. It is  immediate to see that assumptions {\sf A0$'$}  {\sf A1}, {\sf A2}, {\sf A3}, and  {\sf A4$'$} hold for {\sc OPF$^\phi$}.  \hfill$\blacksquare$
\end{proof}

Hereafter, we implicitly consider the rotated problem which, with a slight abuse of notation, is simply denoted by OPF.

\subsubsection{Exact Second Order Cone Relaxation}\label{sec:rlx}
As observed from the preceding formulation, OPF's feasible set is non-convex due to the quadratic equality constraint~\raf{p1:con1}. Replacing this by $\ell_{i,j} \ge \frac{|S_{i,j}|^2}{v_i}$, one obtains an SOCP relaxation  of \textsc{OPF}\footnote{Note that Cons.~\raf{p2:con1} can be rewritten as
\begin{align*}
\Bigg\|\left(\begin{array}{c}
2S_{i,j}^{\rm R}\\
2S_{i,j}^{\rm I}\\
\ell_{i,j}-v_i
\end{array}
\right)\Bigg\|_2 \le \ell_{i,j}+v_i\,.
\end{align*}}, defined below and denoted by \textsc{cOPF}.
\begin{align}
\textsc{(cOPF)} &\max_{\substack{s_0, x, v,\ell, S \;\;}} f_{\textsc{OPF}}(s_0, x)  \notag \\
\text{s.t.} \ &\raf{p1:con2}-\raf{p1:con9},\notag \\
&	\ell_{i,j} \ge  \frac{|S_{i,j}|^2}{v_i},~\forall (i,j) \in \cE. \label{p2:con1} 
\end{align}
Let {\sc rcOPF} be the relaxation of {\sc cOPF} where the integrality constraints in~\raf{p1:con8} are replaced by $x_k \in [0,1]$ for all $k \in \cN$. For a given $\hat x\in[0,1]^{\tilde{ n}}$, define by  {\sc cOPF}$[\hat x]$ the restriction of {\sc cOPF} where $x=\hat x$. 

Recently, studies in~\citep{low2014convex2,huang2017sufficient,gan2015exact} presented sufficient conditions for \textsc{cOPF} to have an optimal solution in which Cons.~\raf{p2:con1} holds with equality. For current purposes, we avail of the following lemma which is a slightly simplified version of that in \citep{huang2017sufficient} and is proved in Section~\ref{lem44}. 

\begin{lemma}\label{lem:exact} Under assumptions~{\sf A0}, {\sf A1}, {\sf A2}, and {\sf A3}, for any given $x'\in[0,1]^{\tilde{ n}}$, there exists an optimal solution  $F'=(s_0', x', v',\ell',S')$ of {\sc cOPF}$[x']$ that satisfies  $\ell_{i,j} = \frac{|S_{i,j}'|^2}{v_i'}$ for all  $(i,j)\in \cE.$ Such a solution can be found in polynomial time. 
\end{lemma}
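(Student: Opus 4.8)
The plan is to take an optimal solution of \textsc{cOPF}$[x']$ that in addition has the smallest possible value of $\sum_{e\in\cE}\ell_e$ and to show, by a local perturbation, that such a solution must satisfy~\raf{p2:con1} with equality on every edge; this solution is then produced by solving two convex programs. We may assume \textsc{cOPF}$[x']$ is feasible, else there is nothing to prove. Once $x$ is fixed to $x'$ the variables $S_e$, $\ell_e$, $v_j$ are confined to a compact set by~\raf{p1:con5}--\raf{p1:con7} (and $s_0=-S_{0,1}$ with them), and $f$ is continuous, so an optimal solution exists; write $\OPT_{x'}$ for its value.

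First I would establish two sign facts that hold at \emph{any} point satisfying the power-flow equations~\raf{p1:con2}--\raf{p1:con4}. For $e=(i,j)$ set $T_e\triangleq S_{i,j}-z_{i,j}\ell_{i,j}=\sum_{k\in\cU_j}s_kx_k+\sum_{t:(j,t)\in\cE}S_{j,t}$. A bottom-up induction on $\cT$, using {\sf A3} at the leaves (there $T_e=\sum_k s_kx_k$) and {\sf A1} for the inductive step (so that $\re(z_g^*z_f)\ge0$ for all edges $g,f$, their impedances lying in the closed first quadrant), gives $\re(z_g^*S_e)\ge\re(z_g^*T_e)\ge0$ for all edges $g$ and $e$, and in particular $\re(z_e^*S_e)=\re(z_e^*T_e)+|z_e|^2\ell_e\ge|z_e|^2\ell_e$. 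Feeding the last inequality into~\raf{p1:con4} yields $v_j\le v_i$ on every parent--child edge, whence $v_j\le v_0\le\overline v_j$ by {\sf A2}: the voltage upper bounds in~\raf{p1:con5} are redundant and can be dropped (this uses only {\sf A1}, {\sf A2}, {\sf A3}).

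The heart of the argument is the perturbation. Suppose $F'=(s_0',x',v',\ell',S')$ is feasible for \textsc{cOPF}$[x']$ with $f(s_0',x')=\OPT_{x'}$ but $\ell'_{i,j}>|S'_{i,j}|^2/v'_i$ for some $e=(i,j)$ (if $v'_i=0$ then~\raf{p2:con1} forces $S'_{i,j}=0$ and one may just reset $\ell'_{i,j}:=0$). Let $\delta\triangleq\ell'_{i,j}-|S'_{i,j}|^2/v'_i>0$; decrease $\ell_{i,j}$ by $\delta$, leave all other $\ell_f$, all $x_k$, all demands and $v_0$ fixed, and recompute $S$ and $v$ via the equalities~\raf{p1:con2}--\raf{p1:con4}. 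Propagating~\raf{p1:con2} up the path $\cP_j$ shows that every $S_f$ with $f\in\cP_j$ decreases by the same vector $z_{i,j}\delta$ while all other $S_f$ are unchanged. Consequently: (i) $s_0^{\rm R}$ increases by $z_{i,j}^{\rm R}\delta\ge0$, so by {\sf A0} the objective does not drop; (ii) for $f\in\cP_j$, $|S_f|$ does not increase, since $|S'_f-z_{i,j}\delta|^2=|S'_f|^2-2\delta\,\re(z_{i,j}^*S'_f)+\delta^2|z_{i,j}|^2$ and $\re(z_{i,j}^*S'_f)\ge|z_{i,j}|^2\ell'_{i,j}\ge|z_{i,j}|^2\delta$ (because, along $\cP_j$, $S'_f-S'_{i,j}$ is a sum of demand terms, line-loss terms $z_g\ell'_g$, and side-subtree flows, each with nonnegative $z_{i,j}$-projection by {\sf A1} and {\sf A3}), and the reverse-direction quantity $-S_f+z_f\ell_f=-T_f$ is controlled the same way; (iii) running~\raf{p1:con4} down from the fixed $v_0$, every $v_w$ weakly increases (each step along $\cP_j$ above $(i,j)$ adds $2\delta\,\re(z_f^*z_{i,j})\ge0$, the step across $(i,j)$ adds $\delta|z_{i,j}|^2\ge0$, and every subtree hanging off $\cP_j$ is shifted up rigidly). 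Hence the new point still obeys~\raf{p1:con7} ($\ell$ only dropped),~\raf{p1:con6} (by (ii)),~\raf{p1:con5} (lower bounds by (iii); upper bounds by the previous paragraph), and~\raf{p2:con1} (for $f\ne(i,j)$ its right-hand side did not grow; at $(i,j)$ the new $\ell_{i,j}=|S'_{i,j}|^2/v'_i\ge|S''_{i,j}|^2/v''_i$). So it is feasible with objective $\ge\OPT_{x'}$, hence optimal, and has strictly smaller $\sum_e\ell_e$. Therefore an optimal solution of \textsc{cOPF}$[x']$ of minimum $\sum_{e\in\cE}\ell_e$ --- which exists by compactness --- must make~\raf{p2:con1} tight on every edge, yielding the desired $F'$. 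For the running time, \textsc{cOPF}$[x']$ is a convex program (indeed~\raf{p2:con1} and~\raf{p1:con6} are second-order cone constraints and $f$ is concave), so one computes $\OPT_{x'}$ and then minimizes the linear functional $\sum_e\ell_e$ over the convex region defined by the \textsc{cOPF}$[x']$ constraints together with $f(s_0,x')\ge\OPT_{x'}$, all in polynomial time.

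I expect the main obstacle to be the feasibility verification inside the perturbation --- concretely, checking that $|S_f|$ and $|-S_f+z_f\ell_f|$ do not increase along the \emph{whole} path $\cP_j$ (not merely locally at $e$) and that~\raf{p2:con1} survives at the perturbed edge itself. Both reduce to the inequalities $\re(z_g^*S_e)\ge|z_g|^2\ell_e$, which is precisely where {\sf A1} and {\sf A3} are used; the rest is routine tree bookkeeping. The remaining point --- exact versus approximate solvability of the two convex programs --- is a standard technicality.
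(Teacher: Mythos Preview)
Your argument is correct and follows the same skeleton as the paper: solve \textsc{cOPF}$[x']$ to get $\OPT_{x'}$, then minimize $\sum_e\ell_e$ over the optimal face, and show by a perturbation that the minimizer must have~\raf{p2:con1} tight everywhere. The only substantive difference is the perturbation you choose. You decrease $\ell$ on a \emph{single} offending edge and propagate the change up the path $\cP_j$; the paper instead applies a global forward--backward sweep (Alg.~\ref{alg:fbsweep}) that resets $\tilde\ell_e\leftarrow|S'_e|^2/v'_e$ on \emph{all} edges at once before recomputing $S$ and $v$. Both verifications reduce to the same sign facts (your $\re(z_g^*S_e)\ge|z_g|^2\ell_e$ is exactly what the paper uses via~\raf{eq:lem1.0}--\raf{eq:vL2}), so the arguments are interchangeable; your single-edge variant is a bit more elementary to check, while the paper's sweep has the minor advantage of tightening all edges in one pass rather than by an existence/compactness step.
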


\subsection{Reduction Scheme}\label{sec:pre}
 
Having defined OPF formally, we next present the developed technique that obtains approximations for OPF on path distribution networks from LP-based approximations intended for separable $d$-USFP.
\begin{lemma}\label{feas}
	Let $F'=\big(s'_0,x',  v', \ell',  S'\big)$ be a feasible solution for \textsc{rcOPF}. Let $\bar x\in[0,1]^{\tilde{ n}}$ be such that 
	\begin{align}
	\sum_{k\in\cI}u_k\bar x_k&\ge \sum_{k\in\cI}u_k x_k'-\eps f_{\textsc{OPF}}(s_0',x'), \text{ for some }\varepsilon\in[0,1] \label{feas:con0}\\
	\sum_{k \in \cN}  \re\Big( \sum_{(h,t)\in \cP_k \cap \cP_j} z^*_{h,t} s_k\Big) \bar x_k  &\le \sum_{k \in \cN}  \re\Big( \sum_{(h,t)\in \cP_k \cap \cP_j} z^*_{h,t} s_k\Big) x'_k\quad \forall (i,j) \in \cE , \label{feas:con1}\\
	\sum_{k \in \cN_j} s_k^{\rm R} \bar x_k  &\le \sum_{k \in \cN_j} s_k^{\rm R} x'_k\quad \forall (i,j)\in \cE, \label{feas:con2}\\
	\sum_{k \in \cN_j} s_k^{\rm I} \bar x_k &\le\sum_{k \in \cN_j} s_k^{\rm I}  x_k'\quad \forall (i,j)\in \cE,\label{feas:con3}\\
	\bar x_k&= x_k'\quad \forall k\in\cF,\label{feas:con4}
	\end{align} 
	where $f_{\textsc{OPF}}(\cdot)$ is the objective function of OPF. Then, under assumptions~{\sf A0$'$}, {\sf A1}, {\sf A2}, {\sf A3} and {\sf A4$'$}, \textsc{rcOPF$[\bar x]$} has a feasible solution $\tilde F=\big(\tilde s_0,\tilde x,  \tilde v, \tilde \ell, \tilde  S\big)$ such that $f_{\textsc{OPF}}(\tilde s_0,\tilde x)\ge (1-\eps) f_{\textsc{OPF}}(s_0',x')$, where \textsc{rcOPF$[\bar x]$} denotes the restriction of {\sc rcOPF} with $x$ set to $\bar x$. 
\end{lemma}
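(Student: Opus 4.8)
The plan is to mimic the forward-backward sweep construction from the proof of Lemma~\ref{lem:exact}. Starting from $\bar x$, I would define $\tilde F$ by setting $\tilde x=\bar x$, $\tilde v_0=v_0$, and then performing a forward sweep from the leaves to the root, setting $\tilde\ell_{i,j}=|\tilde S_{i,j}|^2/\tilde v_i$ along the way (so Cons.~\raf{p1:con1} holds with equality, which in particular implies \raf{p2:con1}); $\tilde S_{i,j}$ is determined by the flow conservation equation \raf{p1:con2}; and $\tilde s_0=-\tilde S_{0,1}$. Then a backward sweep from the root to the leaves sets $\tilde v_j$ via Ohm's law \raf{p1:con4}. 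By construction, \raf{p1:con1}--\raf{p1:con4} (hence \raf{p2:con1}) and \raf{p1:con9} are satisfied automatically, and \raf{p1:con8} holds since $\bar x\in[0,1]^{\cN}$. So the work is entirely in showing the feasibility of the inequality constraints \raf{p1:con5}, \raf{p1:con6}, \raf{p1:con7}, and in proving the objective bound $f(\tilde s_0,\tilde x)\ge(1-\eps)f(s'_0,x')$.

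The key to the constraint verifications is a set of monotonicity claims comparing the new solution $\tilde F$ with $F'$, exploiting assumptions {\sf A1} ($z_e\ge 0$) and {\sf A4$'$} ($s_k\ge 0$). First I would show, by induction from the leaves upward, that $\tilde S_{i,j}\le S'_{i,j}$ (componentwise in real and imaginary parts) for every edge: the inductive step uses \raf{feas:con2}--\raf{feas:con3} to bound the demand term $\sum_{k\in\cU_j}s_k\tilde x_k$ summed over the subtree, plus the inductive hypothesis on the downstream flows $\tilde S_{j,t}$, plus the fact that $\tilde\ell_{i,j}=|\tilde S_{i,j}|^2/\tilde v_i$ is controlled — here I would need to simultaneously carry an inductive claim that $\tilde v_i\ge v'_i$ (so the loss term $z_{i,j}\tilde\ell_{i,j}$ doesn't blow up) together with $\tilde S_{i,j}\le S'_{i,j}$. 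The voltage claim $\tilde v_j\ge v'_j$ follows from \raf{p1:con4}: $\tilde v_j-v'_j=(\tilde v_i-v'_i)+|z_{i,j}|^2(\tilde\ell_{i,j}-\ell'_{i,j})-2\re(z^*_{i,j}(\tilde S_{i,j}-S'_{i,j}))$, and under {\sf A1} and {\sf A3} the last term is nonnegative when $\tilde S_{i,j}\le S'_{i,j}$; the crucial subtlety (present already in Lemma~\ref{lem:exact}) is handling the $\tilde\ell-\ell'$ term — this is where the left-over slack from the strict inequality in Alg.~\ref{alg:fbsweep} is used. Granting these monotonicities: \raf{p1:con7} follows since $\tilde\ell_{i,j}=|\tilde S_{i,j}|^2/\tilde v_i\le |S'_{i,j}|^2/v'_i\le\ell'_{i,j}\le\overline\ell_{i,j}$; the forward part of \raf{p1:con6} follows since $0\le\tilde S_{i,j}\le S'_{i,j}$ gives $|\tilde S_{i,j}|\le|S'_{i,j}|\le\overline S_{i,j}$, and the reverse part is implied as noted in the text after the assumptions; \raf{p1:con5} follows because $v'_j\le\tilde v_j$ gives the lower bound $\underline v_j\le v'_j\le\tilde v_j$, and the upper bound is dropped under {\sf A1}, {\sf A2}, {\sf A3} — but I also need $\tilde v_j\le\overline v_j$ if the upper bound is kept, which requires a separate argument bounding $\tilde v_j$ from above, likely via $\tilde v_j\le\tilde v_i\le\dots\le v_0\le\overline v_j$ using $\re(z^*_{i,j}\tilde S_{i,j})\ge 0$ from {\sf A3} applied with $\tilde S\ge 0$.

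For the objective bound, I would write $f(\tilde s_0,\tilde x)=f_0(\tilde s_0^{\rm R})+f_1((s_k^{\rm R}\bar x_k)_{k\in\cF})+\sum_{k\in\cI}u_k\bar x_k$. The elastic term is unchanged by \raf{feas:con4}: $f_1((s_k^{\rm R}\bar x_k)_{k\in\cF})=f_1((s_k^{\rm R}x'_k)_{k\in\cF})$. The inelastic term is bounded below using \raf{feas:con0} by $\sum_{k\in\cI}u_k x'_k-\eps f(s'_0,x')$. For $f_0$, I claim $\tilde s_0^{\rm R}\ge s_0'^{\rm R}$ so that by {\sf A0$'$} we get $f_0(\tilde s_0^{\rm R})\ge f_0(s_0'^{\rm R})$ — wait, this is the wrong direction for generation cost intuitively, but recall $s_0=-S_{0,1}$ and we showed $\tilde S_{0,1}\le S'_{0,1}$, hence $\tilde s_0\ge s'_0$, and $f_0$ being non-decreasing (in the rotated sense {\sf A0$'$}) gives $f_0(\tilde s_0^{\rm R})\ge f_0(s_0'^{\rm R})$. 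Actually I should be careful: I want the objective of $\tilde F$ to dominate $(1-\eps)$ times that of $F'$. Combining, $f(\tilde s_0,\tilde x)\ge f_0(s_0'^{\rm R})+f_1(\cdot)+\sum_{k\in\cI}u_kx'_k-\eps f(s'_0,x')=f(s'_0,x')-\eps f(s'_0,x')=(1-\eps)f(s'_0,x')$, using nonnegativity of $f$ and $f_0$.

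The main obstacle I anticipate is the coupled induction establishing $\tilde S_{i,j}\le S'_{i,j}$ and $\tilde v_i\ge v'_i$ simultaneously, specifically controlling the line-loss feedback term $z_{i,j}\tilde\ell_{i,j}$: since $\tilde\ell$ depends on $\tilde v$ which depends (through the backward sweep) on all the $\tilde S$'s, the dependency is genuinely circular unless one orders the claims carefully — I would resolve this by first proving $\tilde S_{i,j}\le S'_{i,j}$ purely from \raf{feas:con2}--\raf{feas:con3} together with the \emph{equality} $\tilde\ell_{i,j}=|\tilde S_{i,j}|^2/\tilde v_i\le|S'_{i,j}|^2/v'_i$ (which only needs $\tilde v_i\ge v'_i$, proven for nodes closer to the root, i.e., doing both sweeps' inductions in a consistent root-to-leaf / leaf-to-root interleaving), and this is exactly the structure already present in the proof of Lemma~\ref{lem:exact} via Alg.~\ref{alg:fbsweep} and Appendix~\ref{app:A}. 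A secondary subtlety is that constraint \raf{feas:con1} — the one involving $\re(\sum_{(h,t)\in\cP_k\cap\cP_j}z^*_{h,t}s_k)$, i.e., the cumulative resistive loss contribution of demand $k$ on the path up to edge $(i,j)$ — must be the inequality that actually makes the flow-magnitude comparison work once losses are accounted for along the whole path; I expect it enters precisely when bounding $\tilde S_{i,j}$ in terms of $S'_{i,j}$ after substituting the recursion for $\tilde\ell$, replacing the naive per-node bounds \raf{feas:con2}--\raf{feas:con3} with the path-aggregated bound. So most of the proof reduces to invoking the already-established feasibility argument of Lemma~\ref{lem:exact} with $\bar x$ in place of $x'$, and the genuinely new content is checking that \raf{feas:con0}--\raf{feas:con4} are exactly the conditions needed for that argument plus the objective bound to go through.
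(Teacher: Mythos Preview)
Your overall outline is close in spirit, but the specific choice you make for $\tilde\ell$ is the wrong one and is precisely what creates the ``coupled induction'' obstacle you flag at the end. You set $\tilde\ell_{i,j}=|\tilde S_{i,j}|^2/\tilde v_i$; but $\tilde v_i$ is only computed in the \emph{backward} sweep (root to leaves), while $\tilde S_{i,j}$---which itself depends on $\tilde\ell_{i,j}$ through \raf{p1:con2}---is computed in the \emph{forward} sweep (leaves to root). So your construction is not even well-defined as a sweep; it is an implicit fixed-point system, and you would have to argue existence of a solution together with all the monotonicity claims, which is exactly the circularity you are worried about.

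The paper sidesteps this entirely with a much simpler choice: it sets $\tilde\ell_{i,j}\leftarrow\ell'_{i,j}$, i.e., it keeps the currents from $F'$ unchanged (this is the modification~\raf{e-msteps} of step~\ref{alg:fbs1} in Alg.~\ref{alg:fbsweep}). With this choice there is no circularity at all, and the proof becomes almost immediate: since the loss terms $z_{i,j}\tilde\ell_{i,j}=z_{i,j}\ell'_{i,j}$ coincide, one gets directly
\[
\Delta S_{i,j}=\tilde S_{i,j}-S'_{i,j}=\sum_{k\in\cN_j}s_k(\bar x_k-x'_k)\le 0
\]
from \raf{feas:con2}--\raf{feas:con3} alone, with no induction on $\tilde v$ needed. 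Then $|\tilde S_{i,j}|\le|S'_{i,j}|\le\overline S_{i,j}$, and $\tilde\ell_{i,j}=\ell'_{i,j}\ge|S'_{i,j}|^2/v'_i\ge|\tilde S_{i,j}|^2/\tilde v_i$, so Cons.~\raf{p2:con1} holds only as an \emph{inequality}---which is fine, since the lemma asks for feasibility in \textsc{rcOPF}$[\bar x]$, not in OPF. You also misplace the role of \raf{feas:con1}: it is not used to compare $\tilde S$ with $S'$ (that uses only \raf{feas:con2}--\raf{feas:con3}) but rather to get the voltage \emph{lower} bound $\tilde v_j\ge v'_j\ge\underline v_j$ after expanding $\tilde v_j$ via~\raf{eq:ew1}. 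Your treatment of the objective bound and of $\tilde v_j\le v_0\le\overline v_j$ is correct.
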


Observe that, in Lemma~\ref{feas} (which is proved in Section~\ref{lprf}), the inequalities \raf{feas:con1}, \raf{feas:con2} and \raf{feas:con3} taken together form a single-source separable $d$-USFP with $d=3$. Indeed, for $k\in\cI$ and $e=(i,j)\in\cE$, define
\begin{align*}
f_k^1(e)=\re\Big( \sum_{e'\in \cP_k \cap \cP_j} z^*_{e'} s_k\Big), \quad
f_k^2(e)=\left\{\begin{array}{ll}
s_k^{\rm R}&\text{ if }k\in\cN_j \\
0&\text{otherwise,}
\end{array}
\right.\quad
f_k^3(e)&=\left\{\begin{array}{ll}
s_k^{\rm I}&\text{ if }k\in\cN_j \\
0&\text{otherwise.}
\end{array}
\right.
\end{align*}
Note that $f^1_k$ is monotone non-decreasing on $\cE$ when ordered by distance from the root, while  $f^2_k$ and $f^3_k$ are monotone non-decreasing considering the reverse order on $\cE$. Moreover, these functions are of the form~\raf{funct} (i.e., separability condition in $d$-USFP). For $r=2$ (similarly, for $r=3$), set $T_2=1$, $a_k^{2,1}:=s_k^{\rm R}$, $e_k^2=\hat e_k^2:= e_{j(k)}$, $b^{2,1}(e):=1$.
As for $r=1$, note that
\begin{align}
\label{fff0}
f_k^1(e)=\Big(\sum_{e'\in \cP_k \cap \cP_j} z^{\rm R}_{e'}\Big) s^{\rm R}_k+\Big(\sum_{e'\in \cP_k \cap \cP_j} z^{\rm I}_{e'}\Big) s^{\rm I}_k \text{ for } \forall e=(i,j)\in\cE .
\end{align}
Thus, setting $T_1=2$, $a_k^{1,1}:= \re( s_k)$, $a_k^{1,2}:=\im( s_k)$, $e^1_k:=e_1$, $\hat e^1_k:=e_{j(k)}$, $\tilde{b}^{1,1}((i,j)):=\sum_{e'\in \cP_j} \re( z_{e'})$ and $\tilde{b}^{1,2}((i,j)):=\sum_{e'\in \cP_j} \im( z_{e'})$ writes $f_k^1(e)$ in the form \raf{funct}.

The above arguments coupled with Lemma~\ref{feas}, imply the following theorem.

\begin{theorem} Under assumptions~{\sf A0$'$}, {\sf A1}, {\sf A2}, {\sf A3}, {\sf A4$'$}, and {\sf A5}, there is a quasi-polynomial time algorithm that for any $\epsilon \in (0,1)$ produces a $(1- \eps)$-approximation for \textsc{OPF} on line networks with single substation generator. \label{qptasopf}
\end{theorem}
\begin{proof}
	Let $\widehat{\OPT}$ be the optimal objective value of OPF. Consider the approximation scheme detailed in Alg.~\ref{alg:qptas-}, which is the analog of Alg.~\ref{alg:qptas} for OPF.
	
	\begin{algorithm}[!htb]
		\caption{ {\sc QPTAS-OPF} \label{alg:qptas-} }
		\begin{algorithmic}[1]
			\Require An approximation parameter $\epsilon\in(0,1)$; {\sc OPF} input $v_0;(\underline v_j, \overline v_j)_{j\in \cV^+}; (\overline S_e, \overline  \ell_e, z_e)_{e\in \cE}$
			\Ensure	A solution $\hat F$ to {\sc OPF} such that $f_{\textsc{OPF}}(\hat F) \ge (1-O(\epsilon)) \widehat{\OPT}$
			
			\For {each selection {$\Big(\cL = (\cL^q)_{q\in\cQ}, h = \big(h^q=(h^{q,p,r})_{p\in[P_r],~r\in[d]}\big)_{q\in\cQ}\Big)$ such that}  $\cL^q\subseteq \cI$, $|\cL^q| \le \frac{\sum_{r=1}^dP_r}{\epsilon^2}$  and $h^{q,p,r}\in F^r$ } \label{alg:guess-} 
			\If{{\sc rcOPF$[\cL,h]$} is feasible} \label{qptas:mc-feas-}
			
			\State $F' \leftarrow$  Solution of {\sc rcOPF$[\cL,h]$} \label{qptas:cvx}
			\For{$q\in\cQ$}\label{qptas:Q-}
			\State Let $\cS^q$ be given by~\raf{small}
			\For {every $(h,\epsilon)$-restricted profile $g^q$}\label{qptas:prof-}
			\State$ (\hat x_k)_{k\in \cS^q} \leftarrow $ Integral vector returned by applying Lemma~\ref{l4}  with vector $h^q$, and {$(\tilde x_k)_{k\in \cS^q}=(x'_k)_{k\in \cS^q}$} \label{qptas:mc-round-}
			\EndFor
			\EndFor
			\State$\bar x_k\leftarrow \left\{\begin{array}{ll}
			\hat x_k &\text{ if }k\in \bigcup_{q\in\cQ}\cS^q, \\
			x'_k&\text { if }{k\in \cN\setminus(\bigcup_{q\in\cQ}\cS^q)} 
			\end{array}\right.$  \label{qptas:mc-round--}
			\State $\tilde F \leftarrow$  Solution of \textsc{cOPF$[\bar x]$}  \label{qptas:r3}
			\If{$f_{\textsc{OPF}}(\tilde F) > f_{\textsc{OPF}}(\hat F')$ }
			\State $ \hat F' \leftarrow \tilde F$
			\EndIf 
			\EndIf
			
			\EndFor
			\State Apply Lemma~\ref{lem:exact} to convert $\hat F'$ to a feasible solution $\hat F$ for OPF
			\State \Return $\hat F$
		\end{algorithmic}
	\end{algorithm}
	
	Similar to Alg.~\ref{alg:qptas}, the algorithm guesses the set of large demands $\cL^q\subseteq\cI^q$ in the optimal solution for each group $q\in\cQ$, and the peaks $h^{q,p,r}$, within $1+\epsilon$, of the small demands in the optimal solution within the interval $\cE_p^r$. Let $\cL=(\cL^q)_{q\in\cQ}$ and $h^q=(h^{q,p,r})_{p\in[P_r],~r\in[d]}$ where $h^{q,p,r}\in F^r$. Define a restrictive version of {\sc rcOPF}, denoted by {\sc rcOPF$[\cL,h]$}, which enforces that $x_{k} = 1$ for all ${k} \in \cL^q$ and $q\in\cQ$ and that the peak total contribution of the small demands in group $q$ within the interval $\cE_p^r$ is at most $h^{q,p,r}$: $\sum_{k\in\cS^q}\overline f_k^{p,r} x_k\le(1+\epsilon) h^{q,p,r}$.
	
	\begin{align}
	\textsc{(rcOPF}[\cL,h]\textsc{)}\quad &\max_{\substack{s_0, x,v, \ell, S}} f_{\textsc{OPF}}(s_0,x),  \notag \\
	\text{s.t.} \ & \raf{p1:con2}-\raf{p1:con7},\raf{p1:con9},\raf{p2:con1}\\
	&\sum_{k\in\cS^q}\overline{f}_k^{p,r}x_k\le h^{q,p,r},~\forall p\in[P_r],~\forall r\in[d],~\forall  q\in\cQ \label{e111}\\
	& x_k = 0,\quad \forall k \in \cI\setminus \bigcup_{q\in\cQ}(\cL^q\cup\cS^q)\label{e112-}\\
	& x_k = 1,\quad \forall k \in \cL^q,~\forall q\in\cQ\label{e112}\\
	& x_k \in [0,1],\quad \forall k \in \cF\cup (\bigcup_{q\in\cQ}\cS^q). \label{e113}
	\end{align}
	 Here, the set of small demands within group $q\in\cQ$ is
	\begin{align}\label{small-}
	\cS^q=\left\{k\in\cI^q:~\underline{f}_k^{p,r}\le B^{q,p,r}\text{ for all } p\in[P_r],~r\in[d]\right\},
	\end{align}
	where $B^{q,p,r}=\epsilon^2\left[h^{q,p,r}+\sum_{k\in\cL^q}\overline{f}_k^{p,r}\right].$
	Given a feasible solution $ F'=\big(s'_0,x', $ $ v', \ell',  S'\big)$ to {\sc rcOPF$[\cL,h]$}, Alg.~\ref{alg:qptas-} applies Lemma~\ref{l4} with $\tilde{x}=x'$. By the lemma, one can find (in polynomial time) an integral solution $\hat x$ satisfying conditions (i) and (ii). Next, the algorithm recalculates $s_0, S, \ell, v$ utilizing the program~\textsc{cOPF$[\bar x]$} given in Section~\ref{sec:rlx}, and then applies Lemma~\ref{lem:exact} to obtain a feasible solution to OPF.
	
	Define
	{\footnotesize
		\begin{align}
		\label{qp-bd}
		\tilde{M}:=&\max\left\{\frac{\overline z}{\underline z},\max_{r}\frac{\overline f^r}{\underline f^r}\right\} \nonumber\\
		&=\max\left\{\frac{\overline z}{\underline z},\max_{k,k'\in\cI}\frac{s_k^{\rm R}}{ s_{k'}^{\rm R}},\max_{k,k'\in\cI}\frac{s_k^{\rm I}}{ s_{k'}^{\rm I}},\max_{k,k'\in\cI,~(i,j),(i',j')\in\cE}\frac{\re\Big( \sum_{e'\in \cP_k \cap \cP_j} z^*_{e'} s_k\Big)}{\re\Big( \sum_{e'\in \cP_{k'} \cap \cP_{j'}} z^*_{e'} s_{k'}\Big)}\right\},
		\end{align}
	}
	where $\underline z:=\min\{\min_{e:\re(z_e)>0}\re(z_{e}),\min_{e:\im(z_e)>0}\im(z_{e})\}$ and $\overline z:=\max_{e\in\cE}$  $\max\{\re(z_{e}),\im(z_{e})\}$.
	
	In what follows, we prove that, for any fixed $\eps\in(0,1)$, Alg.~\ref{alg:qptas-} arrives at  a $(1- \eps)$-approximation in time $(\frac{\tilde{n}\log (\tilde{ n}m\tilde{ M})}{\eps})^{O(\log^9( \frac{\tilde{n}m\tilde{M}}{\eps})/\eps^2)}$. 
	
	Let $\epsilon:=\frac{\eps}{3(2\beta+1)}$, where $\beta=\max_{r\in\cH^q}2\left(2C_r+\alpha P_r\right)=O(\log^2 (m\tilde{M}))$. 
	The number of possible choices for each $\cL^q$ in step~\ref{alg:guess-} of Alg.~\ref{alg:qptas-} is at most $\tilde{n}^{\sum_{r=1}^dP_r/\epsilon^2}$, where $\tilde{ n}=|\cN|$. Thus, with $d=3$, $P_1=O(\log (m\tilde{M}))$, $P_2=P_3=1$, $T_1=2$, $T_2=T_3=1$, $Q\leq \overline Q^{\sum_{r=1}^dT_r}$, and $\overline Q=O(\log\frac{\tilde{n}\tilde{M}}{\epsilon})$, hence the number of possible choices for $\cL$ is at most \begin{align}\label{ch-L}
	\tilde{n}^{\sum_{r=1}^dP_rQ/\epsilon^2}\le \tilde{n}^{\sum_{r=1}^dP_r\overline Q^{{\sum_{r=1}^dT_r}}/\epsilon^2}= \tilde{n}^{O(\log(m\tilde{M})\log^4( \frac{\tilde{ n}\tilde{ M}}{\epsilon})/\epsilon^2)}.
	\end{align} The number of choices for each $h^q=(h^{q,p,r})_{p\in[P_r],~r\in[d]}$ is $$\overline F^{\sum_{r=1}^dP_r}=O\Bigg(\Big(\frac{\log(\tilde{n}\tilde{M})}{\epsilon}\Big)^{\log (m\tilde{M})}\Bigg)\,,$$ {and that of for} $Q$ in step~\ref{qptas:Q-} is 
	\begin{align}
	\label{ch-Q}
	\overline{Q}^{\sum_{r=1}^d {T}_r}\le\log^4\left( \frac{\tilde{ n}\tilde{ M}}{\epsilon}\right),
	\end{align}
	giving at most
	\begin{align}
	\label{ch-h}
	O\left(\left(\left(\frac{\log(\tilde{n}\tilde{ M})}{\epsilon}\right)^{\log (m\tilde{ M})}\right)^{ Q}\right)&=O\left(\left(\left(\frac{\log(\tilde{ n}\tilde{ M})}{\epsilon}\right)^{\log (m\tilde{ M})}\right)^{\overline Q^{\sum_{r=1}^d {T}_r}}\right)\nonumber\\
	&=O\left(\left(\frac{\log(\tilde{ n}\tilde{ M})}{\epsilon}\right)^{\log (m\tilde{ M})\log^4( \frac{\tilde{ n}\tilde{ M}}{\epsilon})}\right)
	\end{align} choices for $h=(h^q)_{q\in\cQ}$ in step~\ref{alg:guess-}.
	The number of choices for the $\epsilon$-restricted profiles in step~\ref{qptas:prof-} is bounded from above by
	$
	m^{\sum_{r=1}^dP_r/\epsilon}=m^{O(\log{(m\tilde{ M})}/\epsilon)}.
	$
	Thus, the bound on the running time follows from this and~\raf{ch-L},\raf{ch-h},\raf{ch-Q}. 
	
	We now argue that the solution $\hat F$ outputted by Alg.~\ref{alg:qptas-} is $(1-O(\epsilon))$-approximation for {\sc OPF}. 
	Let $F^*=\big(s^*_0,x^*,  v^*, \ell^*,  S^*\big)$ be an optimal solution for {\sc OPF} of objective value $\widehat{\OPT} = f_{\textsc{OPF}}(F^*)$. By the definition of $\hat\cI$, we have
	\begin{align}\label{en0}
	\sum_{k\in\cI\setminus\hat\cI}u_k\le\epsilon\widehat{\OPT}\le\epsilon f_{\textsc{OPF}}(F^*).	
	\end{align}
	Define  $\cT^*\triangleq\{k \in \hat\cI \mid x^*_k = 1\}$ and $(h^*)^{q,p,r}=\sum_{k\in\cT^*\cap\cI^q}\overline f_k^{p,r}$, for $p\in[P_r],$ $r\in[d]$ and $q\in\cQ$.  
	Let $(\cL^*)^q:=\{k\in \cI^q\cap\cT^*:~\underline{f}_k^{p,r}>\epsilon^2(h^*)^{q,p,r}\text{  for some } p\in[P_r],~\text{and some } r\in[d]\}$ be the set of large demands within group $\cI^{q}$ in the optimal solution, and let $(\cS^*)^q:=\cI^q\cap\cT^*\setminus(\cL^*)^q$ be the set of ``small" demands within the same group. Note by this definition that $|(\cL^*)^q|\le\frac{\sum_{r=1}^dP_r}{\epsilon^2}$, and thus $\cL^*=((\cL^*)^q)_{q\in\cQ}$ and $h=(h^q)_{q\in\cQ}$ will be one of the guesses considered by the algorithm in step~\ref{alg:guess-}. Let us focus on this particular iteration of the loop in step~\ref{alg:guess-}. Let $h^{q,p,r}=(1+\epsilon)^{\ell'}\underline{f}^{r}$, where $\ell'$ is the smallest integer (including $-\infty$) such that  $h^{q,p,r}+\sum_{k\in\cL^q}\overline f_k^{p,r}\ge  (h^*)^{q,p,r}$. Note that $h^{q,p,r}\in F^r$,  and
	\begin{align}\label{rel}
	\frac{1}{1+\epsilon}h^{q,p,r}+\sum_{k\in(\cL^*)^q}\overline f_k^{p,r}\le (h^*)^{q,p,r}\le h^{q,p,r}+\sum_{k\in(\cL^*)^q}\overline f_k^{p,r}.
	\end{align}
	
	Moreover, for any $k\in(\cS^*)^q$, $q\in\cQ$, $p\in[P_r]$, and $r\in[d]$, we have by \raf{rel},
	$$
	\underline f_k^{p,r}\le\epsilon^2 (h^*)^{q,p,r}\leq\epsilon^2\left(h^{q,p,r}+\sum_{k\in(\cL^*)^q}\overline f_k^{p,r}\right),
	$$
	and hence $(\cS^*)^q\subseteq\cS^q$. Note also that 
	\begin{align}
	\label{bbb2}
	B^{q,p,r}&=\epsilon^2\left[h^{q,p,r}+\sum_{k\in(\cL^*)^q}\overline{f}_k^{p,r}\right]\nonumber\\
	&\le\epsilon^2\left[h^{q,p,r}+(1+\epsilon)\sum_{k\in(\cL^*)^q}\overline{f}_k^{p,r}\right]\le \epsilon^2(1+\epsilon)(h^*)^{q,p,r}.
	\end{align}
	Furthermore, $x^*$ is feasible for the constraint~\raf{e111} as 
	$$
	\sum_{k\in\cS^q}\overline f_k^{p,r}x_k^*=\sum_{k\in(\cS^*)^q}\overline f_k^{p,r}x_k^*=\sum_{k\in(\cS^*)^q}\overline f_k^{p,r}=(h^*)^{q,p,r}-\sum_{k\in(\cL^*)^q}\overline f_k^{p,r}\le h^{q,p,r}. 
	$$
	It follows that $F^*$ is feasible for~{\sc R1$[\cL,h]$}, implying by~\raf{en0} that  the solution $F'$ obtained in step~\ref{qptas:cvx} of the algorithm satisfies  
	\begin{equation}
	f_{\textsc{OPF}}(F') \ge  (1-\epsilon)f_{\textsc{OPF}}(F^*). \label{eq:qptas-cvx}
	\end{equation}
	For each $q\in\cQ$, there is an $(h,\epsilon)$-restricted profile $g^q$ and an integral solution $ (\hat x_k)_{k\in \cS^q}  $ that satisfy Lemma~\ref{l4}. Since all the possible $(h,\epsilon)$-restricted profiles are probed, the profile $g^q$ will be found in one of the iterations in the loop in line~\ref{qptas:prof-}. Let us consider this iteration. By condition (i) of the lemma, $\sum_{k\in\cS^q}f_k^r(e)\hat x_k\leq \sum_{k\in\cS^q}f_k^r(e){x}'_k$ for all $e\in\cE$ and $r\in[d]$, which implies that conditions~\raf{feas:con1}-\raf{feas:con4} of Lemma~\ref{feas} hold for the vector $\bar x$, defined in line~\ref{qptas:mc-round--} of Alg.~\ref{alg:qptas-}.
	
	At this point, following exactly the same lines as in the proof of Theorem~\ref{th:mcqptas-}, it can be shown that
	
	\begin{align*}
	\sum_{k\in\cI}u_k\bar x_k \ge \sum_{k \in \hat\cI}u_kx_k'-3\epsilon(2\beta+1)f_{\textsc{OPF}}(F').
	\end{align*}
	Thus condition~\raf{feas:con0} in Lemma~\ref{feas} is satisfied with $\eps=3\epsilon(2\beta+1)$ implying that $\tilde F$ is a feasible solution for \textsc{cOPF}, and hence for OPF by Lemma~\ref{lem:exact},  with $f_{\textsc{OPF}}(\tilde F)\ge (1-\eps) f_{\textsc{OPF}}(F')\ge(1-\eps)f_{\textsc{OPF}}(F^*)$. \hfill$\blacksquare$
\end{proof}

\vspace*{5pt}
\begin{remark} Following arguments analogous to those in the above proof, it is conceivable to generalize the logarithmic approximation devised in Section~\ref{sec:logarithmic} to OPF on line networks with single substation generator, provided assumptions~{\sf A0$'$}, {\sf A1}, {\sf A2}, {\sf A3}, {\sf A4$'$} and {\sf A5} hold. To this end, however, an additional constant factor would be lost in the approximation ratio for bounding the capacities (i.e., the right hand sides of inequalities~\raf{feas:con1}, \raf{feas:con2} and \raf{feas:con3}).
\end{remark}

\section{Concluding Remarks}\label{conclrem}

This study defined a novel generalization of UFP, dubbed as $d$-USFP, and bridged it with AC OPF, which is a fundamental problem in power systems engineering. In a preliminary step towards tackling this extended problem, we devised a QPTAS and an efficient logarithmic approximation for its single-source variant. Leveraging the connection between separable $d$-USFP and AC OPF, a (kind of) black-box reduction is developed that, under some mild conditions, allows one to convert an approximation for the former problem to that of for AC OPF on line distribution networks with discrete demands. It's noteworthy that this reduction applies only to algorithms that depend on LP-rounding techniques, hence the focus of the present study on LP-based approximations. Whereas for future work, it would be interesting to generalize and extend the known alternative techniques (e.g., the surveyed combinatorial and dynamic programming based ones) to $d$-USFP, consequently improving upon the current results. As from power systems perspective, one future avenue to explore, would be extension of the established framework to a more practical setting with multiple generation sources and tree networks.

\bibliographystyle{spbasic}      
\bibliography{paper}

\begin{thebibliography}{44}
\providecommand{\natexlab}[1]{#1}
\providecommand{\url}[1]{{#1}}
\providecommand{\urlprefix}{URL }
\expandafter\ifx\csname urlstyle\endcsname\relax
  \providecommand{\doi}[1]{DOI~\discretionary{}{}{}#1}\else
  \providecommand{\doi}{DOI~\discretionary{}{}{}\begingroup
  \urlstyle{rm}\Url}\fi
\providecommand{\eprint}[2][]{\url{#2}}

\bibitem[{Adamaszek et~al.(2016)Adamaszek, Chalermsook, Ene, and
  Wiese}]{Adamaszek2018}
Adamaszek A, Chalermsook P, Ene A, Wiese A (2016) Submodular unsplittable flow
  on trees. In: International Conference on Integer Programming and
  Combinatorial Optimization, Springer, pp 337--349

\bibitem[{Albers et~al.(1999)Albers, Arora, and Khanna}]{Albers:1999}
Albers S, Arora S, Khanna S (1999) Page replacement for general caching
  problems. In: SODA, Citeseer, vol~99, pp 31--40

\bibitem[{Anagnostopoulos et~al.(2014)Anagnostopoulos, Grandoni, Leonardi, and
  Wiese}]{anagnostopoulos2014mazing}
Anagnostopoulos A, Grandoni F, Leonardi S, Wiese A (2014) A mazing (2+
  $\varepsilon$)-approximation for unsplittable flow on a path. In: Proceedings
  of the twenty-fifth annual ACM-SIAM symposium on Discrete algorithms, Society
  for Industrial and Applied Mathematics, pp 26--41

\bibitem[{Bansal et~al.(2006)Bansal, Chakrabarti, Epstein, and
  Schieber}]{BCES06}
Bansal N, Chakrabarti A, Epstein A, Schieber B (2006) A quasi-ptas for
  unsplittable flow on line graphs. In: Proceedings of the thirty-eighth annual
  ACM symposium on Theory of computing, ACM, pp 721--729

\bibitem[{Bansal et~al.(2014)Bansal, Friggstad, Khandekar, and
  Salavatipour}]{BFKS14}
Bansal N, Friggstad Z, Khandekar R, Salavatipour MR (2014) A logarithmic
  approximation for unsplittable flow on line graphs. ACM Transactions on
  Algorithms (TALG) 10(1):1

\bibitem[{Bar-Noy et~al.(2001)Bar-Noy, Bar-Yehuda, Freund, Naor, and
  Schieber}]{Bar-Noy2001}
Bar-Noy A, Bar-Yehuda R, Freund A, Naor J, Schieber B (2001) A unified approach
  to approximating resource allocation and scheduling. Journal of the ACM
  (JACM) 48(5):1069--1090

\bibitem[{Baran and Wu(1989)}]{baran19266}
Baran M, Wu FF (1989) {Optimal sizing of capacitors placed on a radial
  distribution system}. IEEE Transactions on Power Delivery 4(1):735--743,
  \doi{10.1109/61.19266}

\bibitem[{Batra et~al.(2015)Batra, Garg, Kumar, Momke, and Wiese}]{BGKMW15}
Batra J, Garg N, Kumar A, Momke T, Wiese A (2015) New approximation schemes for
  unsplittable flow on a path. In: Proceedings of the twenty-sixth annual
  ACM-SIAM symposium on Discrete algorithms, Society for Industrial and Applied
  Mathematics, pp 47--58

\bibitem[{Bose et~al.(2015)Bose, Gayme, Chandy, and Low}]{BGCL15}
Bose S, Gayme DF, Chandy KM, Low SH (2015) Quadratically constrained quadratic
  programs on acyclic graphs with application to power flow. IEEE Transactions
  on Control of Network Systems 2(3):278--287

\bibitem[{{Briglia} et~al.(2017){Briglia}, {Alaggia}, and {Paganini}}]{7926079}
{Briglia} E, {Alaggia} S, {Paganini} F (2017) Distribution network management
  based on optimal power flow: Integration of discrete decision variables. In:
  2017 51st Annual Conference on Information Sciences and Systems (CISS), pp
  1--6, \doi{10.1109/CISS.2017.7926079}

\bibitem[{Calinescu et~al.(2002)Calinescu, Chakrabarti, Karloff, and
  Rabani}]{3-540-47867}
Calinescu G, Chakrabarti A, Karloff H, Rabani Y (2002) Improved approximation
  algorithms for resource allocation. In: International Conference on Integer
  Programming and Combinatorial Optimization, Springer, pp 401--414

\bibitem[{Carpentier(1962)}]{C62}
Carpentier J (1962) Contribution a l’etude du dispatching economique.
  Bulletin de la Societe Francaise des Electriciens 3(1):431--447

\bibitem[{Chakrabarti et~al.(2007)Chakrabarti, Chekuri, Gupta, and
  Kumar}]{CCGK07}
Chakrabarti A, Chekuri C, Gupta A, Kumar A (2007) Approximation algorithms for
  the unsplittable flow problem. Algorithmica 47(1):53--78

\bibitem[{{Chapman} et~al.(2013){Chapman}, {Verbič}, and {Hill}}]{6629395}
{Chapman} AC, {Verbič} G, {Hill} DJ (2013) A healthy dose of reality for
  game-theoretic approaches to residential demand response. In: 2013 IREP
  Symposium Bulk Power System Dynamics and Control - IX Optimization, Security
  and Control of the Emerging Power Grid, pp 1--13,
  \doi{10.1109/IREP.2013.6629395}

\bibitem[{{Chau} et~al.(2018){Chau}, {Elbassioni}, and {Khonji}}]{8637153}
{Chau} SCK, {Elbassioni} K, {Khonji} M (2018) Combinatorial Optimization of
  Alternating Current Electric Power Systems. \doi{10.1561/3100000017}

\bibitem[{Chekuri et~al.(2007)Chekuri, Mydlarz, and Shepherd}]{CMS07}
Chekuri C, Mydlarz M, Shepherd FB (2007) Multicommodity demand flow in a tree
  and packing integer programs. ACM Transactions on Algorithms (TALG) 3(3):27

\bibitem[{Chrobak et~al.(2012)Chrobak, Woeginger, Makino, and Xu}]{Chrobak2012}
Chrobak M, Woeginger GJ, Makino K, Xu H (2012) Caching is hard—even in the
  fault model. Algorithmica 63(4):781--794

\bibitem[{Cook et~al.(1998)Cook, Faber, Marathe, Srinivasan, and
  Sussmann}]{BFb0055088}
Cook D, Faber V, Marathe M, Srinivasan A, Sussmann YJ (1998) Low-bandwidth
  routing and electrical power networks. In: International Colloquium on
  Automata, Languages, and Programming, Springer, pp 604--615

\bibitem[{Elbassioni et~al.(2019)Elbassioni, Karapetyan, and
  Nguyen}]{Elbassioni2019}
Elbassioni K, Karapetyan A, Nguyen TT (2019) Approximation schemes for
  r-weighted minimization knapsack problems. Annals of Operations Research
  279(1):367--386

\bibitem[{Farivar and Low(2013)}]{FL13a}
Farivar M, Low SH (2013) Branch flow model: Relaxations and
  convexification—part i. IEEE Transactions on Power Systems 28(3):2554--2564

\bibitem[{Frank et~al.(2012)Frank, Steponavice, and Rebennack}]{F12a}
Frank S, Steponavice I, Rebennack S (2012) Optimal power flow: a bibliographic
  survey i. Energy Systems 3(3):221--258

\bibitem[{Gan et~al.(2015)Gan, Li, Topcu, and Low}]{gan2015exact}
Gan L, Li N, Topcu U, Low SH (2015) Exact convex relaxation of optimal power
  flow in radial networks. IEEE Transactions on Automatic Control 60(1):72--87

\bibitem[{Grandoni et~al.(2018)Grandoni, Momke, Wiese, and
  Zhou}]{Grandoni3188745}
Grandoni F, Momke T, Wiese A, Zhou H (2018) A (5/3+
  $\varepsilon$)-approximation for unsplittable flow on a path: placing small
  tasks into boxes. In: Proceedings of the 50th Annual ACM SIGACT Symposium on
  Theory of Computing, ACM, pp 607--619

\bibitem[{Grandoni et~al.(2022{\natexlab{a}})Grandoni, M\"{o}mke, and
  Wiese}]{10.1145/3519935.3519959}
Grandoni F, M\"{o}mke T, Wiese A (2022{\natexlab{a}}) A ptas for unsplittable
  flow on a path. In: Proceedings of the 54th Annual ACM SIGACT Symposium on
  Theory of Computing, Association for Computing Machinery, New York, NY, USA,
  STOC 2022, p 289–302

\bibitem[{Grandoni et~al.(2022{\natexlab{b}})Grandoni, M\"{o}mke, and
  Wiese}]{1.9781611977073.39}
Grandoni F, M\"{o}mke T, Wiese A (2022{\natexlab{b}}) {Unsplittable Flow on a
  Path: The Game!} In: Proceedings of the 2022 Annual ACM-SIAM Symposium on
  Discrete Algorithms (SODA), pp 906--926, \doi{10.1137/1.9781611977073.39}

\bibitem[{Hall and Magazine(1994)}]{Hall1994}
Hall NG, Magazine MJ (1994) Maximizing the value of a space mission. European
  journal of operational research 78(2):224--241

\bibitem[{Hijazi et~al.(2017)Hijazi, Coffrin, and Hentenryck}]{Hijazi2017}
Hijazi H, Coffrin C, Hentenryck PV (2017) Convex quadratic relaxations for
  mixed-integer nonlinear programs in power systems. Mathematical Programming
  Computation 9(3):321--367

\bibitem[{Huang et~al.(2017)Huang, Wu, Wang, and Zhao}]{huang2017sufficient}
Huang S, Wu Q, Wang J, Zhao H (2017) A sufficient condition on convex
  relaxation of ac optimal power flow in distribution networks. IEEE
  Transactions on Power Systems 32(2):1359--1368

\bibitem[{{Karapetyan} et~al.(2018){Karapetyan}, {Khonji}, {Chau},
  {Elbassioni}, and {Zeineldin}}]{7590153}
{Karapetyan} A, {Khonji} M, {Chau} C, {Elbassioni} K, {Zeineldin} HH (2018)
  Efficient algorithm for scalable event-based demand response management in
  microgrids. IEEE Transactions on Smart Grid 9(4):2714--2725,
  \doi{10.1109/TSG.2016.2616945}

\bibitem[{Karapetyan et~al.(2021)Karapetyan, Khonji, Chau, Elbassioni,
  Zeineldin, EL-Fouly, and Al-Durra}]{9301221}
Karapetyan A, Khonji M, Chau SCK, Elbassioni K, Zeineldin H, EL-Fouly THM,
  Al-Durra A (2021) A competitive scheduling algorithm for online demand
  response in islanded microgrids. IEEE Transactions on Power Systems
  36(4):3430--3440, \doi{10.1109/TPWRS.2020.3046144}

\bibitem[{Khonji et~al.(2018)Khonji, Chau, and Elbassioni}]{khonji2016optimal}
Khonji M, Chau CK, Elbassioni K (2018) Optimal power flow with inelastic
  demands for demand response in radial distribution networks. IEEE
  Transactions on Control of Network Systems 5(1):513--524

\bibitem[{Khonji et~al.(2019)Khonji, Karapetyan, Elbassioni, and
  Chau}]{KHONJI201934}
Khonji M, Karapetyan A, Elbassioni K, Chau SCK (2019) Complex-demand scheduling
  problem with application in smart grid. Theoretical Computer Science
  761:34--50

\bibitem[{{Khonji} et~al.(2020){Khonji}, {Chau}, and {Elbassioni}}]{8892494}
{Khonji} M, {Chau} SC, {Elbassioni} K (2020) Combinatorial optimization of ac
  optimal power flow with discrete demands in radial networks. IEEE
  Transactions on Control of Network Systems 7(2):887--898,
  \doi{10.1109/TCNS.2019.2951657}

\bibitem[{Kolliopoulos and Stein(2001)}]{kolliopoulos2001approximation}
Kolliopoulos SG, Stein C (2001) Approximation algorithms for single-source
  unsplittable flow. SIAM Journal on Computing 31(3):919--946

\bibitem[{Korovesis et~al.(2004)Korovesis, Vokas, Gonos, and Topalis}]{1339347}
Korovesis PN, Vokas GA, Gonos IF, Topalis FV (2004) Influence of large-scale
  installation of energy saving lamps on the line voltage distortion of a weak
  network supplied by photovoltaic station. IEEE transactions on power delivery
  19(4):1787--1793

\bibitem[{{Lin} and {Lin}(2008)}]{4578739}
{Lin} C, {Lin} S (2008) Distributed optimal power flow with discrete control
  variables of large distributed power systems. IEEE Transactions on Power
  Systems 23(3):1383--1392, \doi{10.1109/TPWRS.2008.926695}

\bibitem[{Low(2014{\natexlab{a}})}]{low2014convex1}
Low SH (2014{\natexlab{a}}) {Convex relaxation of optimal power flow—Part I:
  Formulations and equivalence}. IEEE Transactions on Control of Network
  Systems 1(1):15--27

\bibitem[{Low(2014{\natexlab{b}})}]{low2014convex2}
Low SH (2014{\natexlab{b}}) {Convex relaxation of optimal power flow—Part II:
  Exactness}. IEEE Transactions on Control of Network Systems 1(2):177--189

\bibitem[{{Mhanna} et~al.(2016){Mhanna}, {Chapman}, and {Verbič}}]{7433473}
{Mhanna} S, {Chapman} AC, {Verbič} G (2016) A fast distributed algorithm for
  large-scale demand response aggregation. IEEE Transactions on Smart Grid
  7(4):2094--2107, \doi{10.1109/TSG.2016.2536740}

\bibitem[{Momke and Wiese(2015)}]{10.1007/978-3}
Momke T, Wiese A (2015) A (2+ epsilon)-approximation algorithm for the storage
  allocation. In: 42nd International Colloquium on Automata, Languages, and
  Programming, Springer, pp 973--984

\bibitem[{Phillips et~al.(2000)Phillips, Uma, and Wein}]{1099-1425}
Phillips CA, Uma R, Wein J (2000) Off-line admission control for general
  scheduling problems. Journal of Scheduling 3(6):365--381

\bibitem[{Raghavan and Tompson(1987)}]{Raghavan1987}
Raghavan P, Tompson CD (1987) Randomized rounding: a technique for provably
  good algorithms and algorithmic proofs. Combinatorica 7(4):365--374

\bibitem[{Srinivasan(1999)}]{Srinivasan1999}
Srinivasan A (1999) Improved approximation guarantees for packing and covering
  integer programs. SIAM Journal on Computing 29(2):648--670

\bibitem[{Zhang and Tse(2013)}]{6290429}
Zhang B, Tse D (2013) Geometry of injection regions of power networks. IEEE
  Transactions on Power Systems 28(2):788--797,
  \doi{10.1109/TPWRS.2012.2208205}

\end{thebibliography}

\appendix 

\section*{Appendix}

\section{Proof of Lemma~\ref{l4}}\label{lemma2b}
	
\begin{proof}
	For $r\in[d]$, consider the graph of the fractional profile  $\sum_{k\in\cS^q}f_k^r(e)\tilde{x}_k$ illustrated in Figure~\ref{f1}. For $p\in[P_r]$, slice the region between the horizontal axis and horizontal line at height $h^{q,p,r}$ with $\frac{1}{\epsilon}+1$ horizontal lines, with inter-distance $\epsilon h^{q,p,r}$. The intersections of the optimal profile  with these lines define a monotone function $g^{q,r}$, as pictured in Figure~\ref{f1}, with $g^{q,r}(e)\in\{l\epsilon h^{p,r}:~l\in\{0,1\ldots,1/\epsilon\},~p\in[P_r]\}$, for all $e\in\cE$. We adopt a greedy procedure, explained in Algorithm~\ref{alg:modify} below, to remove a set of demands from $\cS^q$ in each interval $\cE_p^r$ such that the remaining set of demands fractionally fits below $g^{q,r}$ (see lines~\ref{alg:s1}-\ref{alg:s7}). The algorithm proceeds by removing the ``left-most" set of demands that minimally ensures that the remaining ones in $\cS^q$ can be packed under capacity $g^{q,r}$.  This defines an intermediate fractional vector $\bar{\bar x}$ for separable $d$-USFP-R[$\cS^q,g^q$], where $g^q=(g^{q,r})_{r\in[d]}$, which can be converted to a basic feasible solution (BFS) with the same or better objective value. Lastly, the fractional components of $\bar{\bar x}$  are rounded down yielding an integral solution $\hat x$.  
	\begin{algorithm}[!htb]
		\caption{ {\sc Modify} \label{alg:modify} }
		\begin{algorithmic}[1]
			\Require  $q\in\cQ$; a restricted profile $\mbox{RP}_{\epsilon}(h;w;g)$; a set of users $\cS^q\subseteq\cI^q$; a fractional vector $(\tilde x_k)_{k\in\cS^q}\in[0,1]^{\cS^q}$
			\Ensure	A integral vector $(\hat x_k)_{k\in\cS^q}\in\{0,1\}^{\cS^q}$ satisfying conditions (i) and (ii) of lemma~\ref{l4}		
			\State $\bar x\leftarrow\tilde x$; $t\leftarrow 0$\label{alg:s2} 
			\For{$r=1$ to $d$}\label{alg:s1} 
			\For{$p=1,\ldots,P_r$}\label{alg:s1-}
			\State $i\leftarrow \underline{i}(p,r)$
			\While {$t<\epsilon h^{q,p,r}$ } \label{alg:cond} 
			\If{$\exists k\in\cS^q$ such that $\tilde x_kf_k^r(e_i)>0$}\label{alg:s3} 
			\State $\bar x_k=0$\label{alg:s4} 
			\State $t\leftarrow t+\tilde{x}_kf_k^r(e_i)$\label{alg:s5} 
			\Else{}
			$i\leftarrow i+1$\label{alg:s6} 
			\EndIf		  
			\EndWhile
			\EndFor
			\EndFor \label{alg:s7} 
			\State Convert $\bar x$ to a  BFS $\bar{\bar x}$ for $d$-USFP-R[$\cS,g^q$] with $\sum_{k\in\cS}u_k\bar{\bar x}_k\ge\sum_{k\in\cS}u_k\bar x_k$ \label{alg:s8} 
			\State$ (\hat x_k)_{k\in \cS^q} \leftarrow  \big( \lfloor \bar{\bar x}_k \rfloor \big)_{k\in \cS^q}$  \label{alg:USFP-round}
			\State \Return $\hat x$
		\end{algorithmic}
	\end{algorithm}

	We first show that condition (i) holds when $\hat x$ is replaced by $\bar x$. For $r\in[d]$, let $\cJ^{r}(e_i)$ be the set of demands $k\in\cS^q$ for which $\bar x_k$ was set to $0$ in step~\ref{alg:s4} when considering edge $e_i\in\cE$. Consider an edge $e\in\cE_p^r$ such that $\sum_{k\in\cS^q}f_k^r(e)\bar x_k>0$. Note that $0\leq \sum_{k\in\cS^q}f_k^r(e)\tilde x_k-g^{q,r}(e)\le \epsilon h^{q,p,r}$ by \raf{rel0} and the definition of $g^{q,r}$.
	By the monotonicity of $f^r_k(\cdot)$ and the condition of the while-loop in step~\ref{alg:cond} we have
	\begin{align*}
	\sum_{k\in\cS^q}f_k^r(e)\bar x_k&=\sum_{k\in\cS^q}f_k^r(e)\tilde x_k-\sum_{i:~e_i\le e}\sum_{k\in\cJ^{r}(e_i)}f_k^r(e)\tilde x_k\\
	&\le \sum_{k\in\cS^q}f_k^r(e)\tilde x_k-\sum_{i:~e_i\le e}\sum_{k\in\cJ^{r}(e_i)}f_k^r(e_i)\tilde x_k\\
	&\le \sum_{k\in\cS^q}f_k^r(e)\tilde x_k-\epsilon h^{q,p,r}\\
	&\le g^{q,r}(e).
	\end{align*}
	\begin{figure}
		\center\includegraphics[width=7.2cm]{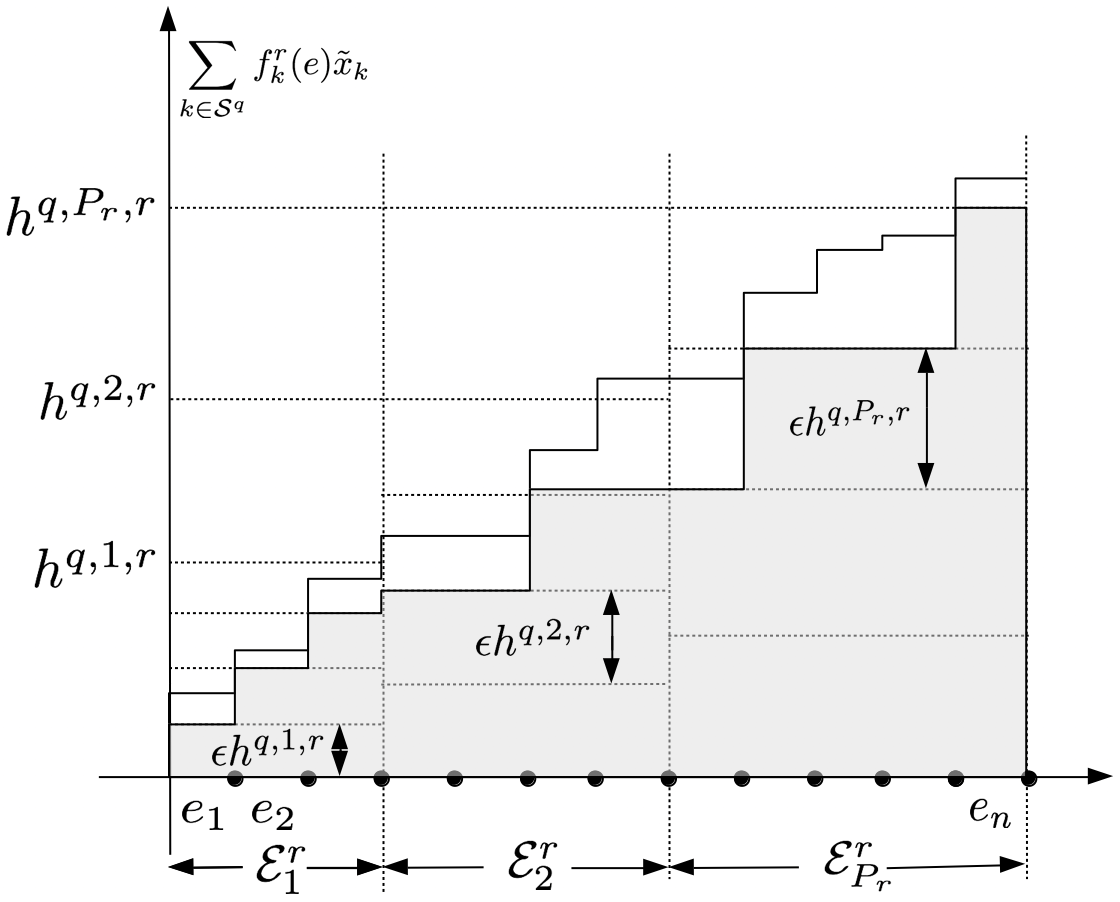}\\
		\caption{A profile and its $(h,\epsilon)$-restriction.}
		\label{f1}
	\end{figure}
	Since $\bar x$ is feasible for $d$-USFP-R[$\cS^q,g^q$], one can obtain a BFS $\bar{\bar x}$ for the same linear program with $\sum_{k\in\cS^q}u_k\bar{\bar x}_k\ge\sum_{k\in\cS}u_k\bar x_k$ as in step~\ref{alg:s8} of procedure {\sc Modify}. Then, round down the fractional components in $\bar{\bar{x}}$ to obtain an integral solution $\hat x$. Note that, for all $e\in\cE$, 
	$$\sum_{k\in\cS^q}f_k^r(e)\hat x_k\le \sum_{k\in\cS^q}f_k^r(e)\bar {\bar x}_k\le g^{q,r}(e)\le \sum_{k\in\cS^q}f_k^r(e)\tilde{x}_k,$$
	and hence (i) holds.
	
	Note that the total fractional utility of demands removed by Algorithm~\ref{alg:modify}  in steps~\ref{alg:s1}-\ref{alg:s7} is 
	\begin{align*}
	\sum_{r\in[d],~e\in\cE}\sum_{k\in\cJ^r(e)}u_k\tilde x_k&=
	\sum_{r\in[d]:~H^{P_r,r}>0}\sum_{p=1}^{P_r}\sum_{e\in\cE_p^r}\sum_{k\in\cJ^r(e)}u_k\tilde x_k\\
	&\le \sum_{r\in\cH^{q}}\sum_{p=1}^{P_r}\sum_{e\in\cE_p^r}\sum_{k\in\cJ^r(e)} \frac{1}{H^{q,p,r}} \sum_{t=1}^{T_r}a_k^{r,t}b^{r,t}(e_{\overline i(p,r)})\tilde x_k\\
	&=\sum_{r\in\cH^{q}}\sum_{p=1}^{P_r}\frac{1}{H^{q,p,r}} \sum_{e\in\cE_p^r}\sum_{k\in\cJ^r(e)} f_k^r(e_{\overline i(p,r)})\tilde x_k\\
	&\le\sum_{r\in\cH^{q}}\sum_{p=1}^{P_r}\frac{C_r}{H^{q,p,r}} \left(\sum_{e\in\cE_p^r}\sum_{k\in\cJ^r(e)} f_k^r(e)\tilde x_k\right)\\
	&\le\sum_{r\in\cH^{q}0}\sum_{p=1}^{P_r}\frac{C_r}{H^{q,p,r}}(\epsilon h^{q,p,r}+B^{q,p,r}),
	\end{align*}
	where we use the fact that $k\in\cI^q$ in the first inequality, property~\raf{property} in the second inequality, and $\underline{f}_k^{p,r}\le B^{q,p,r}$  and the condition of the while-loop in step~\ref{alg:cond} in the last inequality. (Note that we sum above over $r\in[d]$ such that in $H^{q,P_r,r}>0$ since $k\in\cJ^r(e)$ implies that $f_k^r(e)>0$, which in turn implies by~\raf{p1} that $H^{q,P_r,r}>0$.) 
	
	It follows that
	\begin{align}\label{e2}
	\sum_{k \in\cS^q}u_k\bar x_k&\ge\sum_{k \in\cS^q}u_k\tilde x_k-\sum_{r,e}\sum_{k\in\cJ^r(e)}u_k\tilde x_k \nonumber \\
	&\ge \sum_{k \in\cS^q}u_k\tilde x_k-\sum_{r\in\cH^{q}}\sum_{p=1}^{P_r}\frac{C_r}{H^{q,p,r}}(\epsilon h^{q,p,r}+B^{q,p,r}).
	\end{align}
	
	By the monotonicity of the functions $f^r_k(\cdot)$, $d$-USFP-R[$\cS,g^q$] has only $\frac{1}{\epsilon}\sum_{r=1}^dP_r$ {\it non-redundant} packing inequalities of the form \raf{ee0}. It follows that the BFS $\bar{\bar x}$ computed in step~\ref{alg:s8} has at most $\frac{1}{\epsilon}\sum_{r=1}^dP_r$ fractional components $\bar{\bar x}\in(0,1)$. Thus,
	\begin{align}\label{e2-}
	\sum_{k \in\cS^q}u_k{\hat x}_k&=\sum_{k \in\cS^q}u_k\bar{\bar x}_k-\sum_{k \in\cS^q:~\bar{\bar x}_k\in(0,1)}u_k\bar{\bar x}_k \nonumber \\
	&\ge\sum_{k \in\cS^q}u_k\bar x_k-\frac{1}{\epsilon}\sum_{r=1}^dP_r\cdot\max_ku_k\bar{\bar x}_k \nonumber\\
	&\ge\sum_{k \in\cS^q}u_k\bar{ x}_k-\frac{1}{\epsilon}\frac{\sum_{r=1}^dP_r}{\sum_{r\in\cH^{q}}P_r}\sum_{r\in\cH^{q}}\frac{P_rB^{q,P_r,r}}{H^{q,P_r,r}},
	\end{align}
	where we use in the last inequality that $\bar{\bar x}_k\le 1$ and 
	\begin{align*}
	u_k&\le \frac{\sum_{r\in\cH^{q}}P_r\sum_{t=1}^{P_r}a_k^{r,t}b^{r,t}(e_{n})/H^{q,P_r,r}}{\sum_{r\in\cH^{q}}P_r}=\frac{\sum_{r\in\cH^{q}}P_rf_k^r(e_n)/H^{q,P_r,r}}{\sum_{r\in\cH^{q}}P_r}\\ 
	& \le \frac{\sum_{r\in\cH^{q}}P_rB^{q,P_r,r}/H^{q,P_r,r}}{\sum_{r\in\cH^{q}}P_r},
	\end{align*}
	for $k\in\cS^q$. Condition (ii) follows from~\raf{e2} and~\raf{e2-}. \hfill$\blacksquare$
\end{proof}

\section{Proof of Lemma~\ref{lem:exact}}\label{lem44}

\begin{algorithm}[!htb]
	\caption{ {\sc Forward-Backward-Sweep} \label{alg:fbsweep} }
	\begin{algorithmic}[1]
		\Require  A feasible solution  $F'=(s_0', x', v',\ell',S')$ to {\sc cOPF$'[x']$} such that $\ell_{h,t}' > \frac{|S_{h,t}'|^2}{v_h'}$ for some $(h,t)\in \cE$
		\Ensure	A feasible solution $\tilde F=(\tilde s_0, \tilde x, \tilde v, \tilde \ell, \tilde S)$ to {\sc cOPF$'[x']$} such that $\tilde x = x'$ and $\sum_{e\in\cE}\tilde\ell_e<\sum_{e\in\cE}\ell_e'$
		\State $\tilde x \leftarrow x'$; $\tilde v_0 \leftarrow v_0$ \label{alg:fbs0}
		\State Number nodes $\cV=\{0,1,\ldots, m\}$ in a breadth-first search order
		\For{$j=m,m-1,\ldots,1$}  /* Forward sweep */
		\State Let $i$ be s.t. $(i,j) \in \cE$
		\State  $\tilde \ell_{i,j} \leftarrow \frac{|S_{i,j}'|^2}{v_i'}$   \label{alg:fbs1}
		\State $\tilde S_{i,j} \leftarrow \sum_{k \in \cN_j} s_k \tilde x_k  + \sum_{t:(j,t)\in \cE} \tilde S_{j,t} + z_{i,j} \tilde \ell_{i,j}$\label{alg:fbs2}
		\EndFor 
		\State $\tilde s_0 \leftarrow - \tilde S_{0,1}$ \label{alg:fbs3}
		\For{$j = 1,2,\ldots, m$} /* Backward sweep */
		\State Let $i$ be s.t. $(i,j) \in \cE$ 
		\State $\tilde v_j \leftarrow \tilde v_i + |z_{i,j}|^2 \tilde \ell_{i,j} - 2 \re(z_{i,j}^\ast  \tilde S_{i,j}) $  \label{alg:fbs4}
		\EndFor
		\State \Return $\tilde F$
	\end{algorithmic}
\end{algorithm}

\begin{proof}
	The analysis follows the same lines as in \citep{gan2015exact, low2014convex2,huang2017sufficient} and is sketched here for completeness. 
	Let $F''=(s_0'', x', v'',\ell'',S'')$ be an optimal solution of {\sc cOPF}$[x']$, which can be found (to within any desired accuracy) in polynomial time, by solving a convex program. Consider the following problem. 
	\begin{align}
	\textsc{(cOPF$'[x']$)} \quad &\min_{\substack{s_0, x, v,\ell, S \;\;}}\sum_{e\in\cE}\ell_e,  &\notag \\
	\text{s.t.} \ &\raf{p1:con2}-\raf{p1:con7}, \raf{p1:con9}, \raf{p2:con1}\notag &\\
	& x=x'&\\
	& f_{\textsc{OPF}}(s_0, x)\ge f_{\textsc{OPF}}(s_0'',x'). \label{p3:con1} &
	\end{align}
	Clearly,  \textsc{cOPF$'[x']$}  is feasible as $F''$ satisfies all its constraints. Hence, it has an optimal solution $F'=(s_0', x', v',\ell',S')$, which we claim satisfies the statement of the lemma.  
	Suppose, for the sake of contradiction, that there exists an edge $(h,t)$ such that $\ell_{h,t}' > \tfrac{|S_{h,t}'|^2}{v'_h}$.
	In the sequel, we  construct a feasible solution $\tilde F=(\tilde s_0, x', \tilde v, \tilde \ell, \tilde S)$ for \textsc{cOPF$'[x']$} such that  $\sum_{e\in\cE}\tilde\ell_e<\sum_{e\in\cE}\ell_e'$, leading to a contradiction.
	
	Apply the forward-backward sweep algorithm, illustrated in Alg.~\ref{alg:fbsweep}, on the solution $F'$ to obtain a feasible solution $\tilde F$. 
	
	We show the feasibility of the solution $\tilde F$.
	By Steps~\ref{alg:fbs2}, \ref{alg:fbs3} and \ref{alg:fbs4} of Alg.~\ref{alg:fbsweep}, all equality constraints of \textsc{cOPF$'[x']$}  are satisfied. 
	By Step~\ref{alg:fbs1} and the feasibility of $F'$, we also have 
	\begin{align}\label{e-l-}
	\tilde \ell_e \le \ell_e'\le \overline \ell_e\text{ for all  $e \in \cE$.}
	\end{align}
	Next, by rewriting  $\tilde S_{i,j}$, recursively substituting from the leaves, we get
	\begin{align}\label{et1}
	\tilde S_{i,j}=\sum_{k \in \cN_j} s_k \tilde x_k + \sum_{e \in \cE_j\cup\{(i,j)\}}z_e \tilde \ell_{e}.
	\end{align}

	Write $\Delta \ell_e \triangleq \tilde \ell_e - \ell_e'\le 0$, $\Delta S_e \triangleq \tilde S_e - S_e'$, and $\Delta |S_e|^2 \triangleq  |\tilde S_e|^2 - |S_e'|^2$, for  $e\in \cE$. Let $\widehat S_{j} \triangleq  \sum_{k \in \cN_j} s_k x_k'$,  $\tilde L_{i,j} \triangleq \sum_{e \in \cE_j\cup\{(i,j)\}}z_e \tilde \ell_e$, and $ L'_{i,j} \triangleq \sum_{e \in \cE_j\cup\{(i,j)\}}z_e  \ell_e'$. Note by \raf{et1} that $\tilde S_{i,j} = \widehat S_{j} + \tilde L_{i,j}$ and,  similarly, $ S_{i,j} = \widehat S_{j} +  L_{i,j}'$. It follows that, for all $(i,j)\in\cE$,
	\begin{align}\label{eq:lem1.0}
	\Delta S_{i,j} &=  \tilde L_{i,j}-L_{i,j}'=\sum_{e \in \cE_j\cup\{(i,j)\}}z_{e} \Delta \ell_{e} \le 0,
	\end{align}
	where the inequality follows by assumption {\sf A1}. 
	In particular, for $(i,j)=(0,1)$, we obtain 
	\begin{align}\label{s-e-}
	\tilde s_0^{\rm R} &= -\tilde S_{0,1}^{\rm R}  \ge - S_{0,1}'^{\rm R} = s_0'^{\rm R},
	\end{align}
	implying by {\sf A0} that $f_0(\tilde s_0^{\rm R})  \ge f_0(s_0'^{\rm R})$ and hence \raf{p3:con1} is satisfied.

	Furthermore,
	\begin{align}
	\Delta |S_{i,j}|^2 &= |\tilde S_{i,j}|^2 - |S_{i,j}'|^2\\
	&= (\tilde S_{i,j}^{\rm R})^2 -  (S_{i,j}'^{\rm R})^2 +(\tilde S_{i,j}^{\rm I})^2 - (S_{i,j}'^{\rm I})^2 \\
	&=\Delta S_{i,j}^{\rm R} (\tilde S_{i,j}^{\rm R} +  S_{i,j}'^{\rm R}) + \Delta S_{i,j}^{\rm I} (\tilde S_{i,j}^{\rm I} +  S_{i,j}'^{\rm I})\\
	&=\sum_{e \in \cE_j\cup\{(i,j)\}} z_e^{\rm R} \Delta \ell_e \big( 2 \widehat S^{\rm R}_{j} + \tilde L_{i,j}^{\rm R} + L_{i,j}'^{\rm R} \big) \nonumber \\
	&\quad+ \sum_{e \in \cE_j\cup\{(i,j)\}} z_e^{\rm I} \Delta \ell_e \big( 2 \widehat S^{\rm I}_{j} + \tilde L_{i,j}^{\rm I} + L_{i,j}'^{\rm I} \big) \\
	&= \sum_{e \in \cE_j\cup\{(i,j)\}}  2\Delta \ell_e \re(z_e^*\widehat S_j)+ \sum_{e \in \cE_j\cup\{(i,j)\}}  \Delta \ell_e \re(z_e^*\tilde L_{i,j})\nonumber\\
	&\quad+\sum_{e \in \cE_j\cup\{(i,j)\}}  \Delta \ell_e\re(z_e^*L'_{i,j})\le 0, \label{eq:lem1.1}
	\end{align}
	where Eqn.~\raf{eq:lem1.1} follows by {\sf A1}, {\sf A3} (or {\sf A4$'$}) and $\Delta \ell_e \le 0$. Therefore, by the feasibility of $S_e$, 
	\begin{equation}\label{eq:mS}
	|\tilde{S}_{e}| \le |S_{e}'| \le \overline S_{e} \quad \text{for all } e \in \cE .
	\end{equation}
	Note that, by~{\sf A1}, the inequalities in ~\raf{eq:mS} also imply that the reverse power constraint in~\raf{p1:con6} is satisfied for $\tilde S$.
	
	Rewrite Cons.~\raf{p1:con4}  by recursively substituting $\tilde v_j$, for $j$ moving away from the root, and then substituting for $\tilde S_{h,t}$ using \raf{et1}:
	\begin{align}
	\tilde v_j &= v_0 -2\sum_{ (h,t) \in \cP_{j} } \re(z_{h,t}^\ast  \tilde S_{h,t}) +  \sum_{( h,t) \in \cP_{j} } |z_{h,t}|^2 \tilde \ell_{h,t}\notag\\
	&=v_0 -2 \sum_{ (h,t) \in \cP_{j} } \re\Big(z^*_{h,t}\big(\sum_{k\in \cN_t}  s_k \tilde x_k + \sum_{e \in \cE_t\cup\{(h,t)\}} z_{e} \tilde \ell_{e} \big)\Big) +   \sum_{ (h,t) \in \cP_{j} } |z_{h,t}|^2 \tilde \ell_{h,t},\notag\\
	&=v_0 -2 \sum_{k \in \cN} \re\Big( \sum_{(h,t)\in \cP_k \cap \cP_j} z^*_{h,t} s_k\Big) \tilde x_k   - 2 \sum_{(h,t)\in \cP_j} \re \Big( z_{h,t}^* \sum_{e \in \cE_t} z_{e} \tilde \ell_{e}\Big)  \notag \\ 
	&~~-    2\sum_{ (h,t) \in\cP_{j} }|z_{h,t}|^2 \tilde \ell_{h,t} +\sum_{ (h,t) \in\cP_{j} }|z_{h,t}|^2 \tilde \ell_{h,t},  \label{eq:vLL}
	\end{align}
	where the last statement follows from exchanging the summation operators, and $z^*_{e} z_{e} = |z_{e}|^2$. Thus,
	\begin{align}
	\tilde v_j&=v_0 -2 \sum_{k \in \cN} \re\Big( \sum_{(h,t)\in \cP_k \cap \cP_j} z^*_{h,t} s_k\Big) \tilde x_k \nonumber\\
	&\quad - \Big(2 \sum_{(h,t)\in \cP_j} \re \big( z_{h,t}^* \sum_{e \in \cE_t} z_{e} \tilde \ell_{e}\big) +    \sum_{ (h,t) \in\cP_{j} }|z_{h,t}|^2 \tilde \ell_{h,t}\Big) \nonumber \\
	&\le v_0 <\overline v_j, \label{eq:vL0}
	\end{align}
	where the first inequality follows by {\sf A1} and {\sf A3}, and the last inequality follows by {\sf A2}. 
	Since  $\tilde \ell_e \le \ell'_e$ and $\tilde x = x'$, we get by {\sf A1} and the feasibility of $F'$,
	\begin{align}
	\tilde v_j	&\ge v_0 -2 \sum_{k \in \cN} \re\Big( \sum_{(h,t)\in \cP_k \cap \cP_j} z^*_{h,t} s_k\Big)  x_k' \nonumber \\
	&\quad - \Big(2 \sum_{(h,t)\in \cP_j} \re \big( z_{h,t} \sum_{e \in \cE_t} z_{e}  \ell_{e}\big) +    \sum_{ (h,t) \in\cP_{j} }|z_{h,t}|^2  \ell_{h,t}\Big) \nonumber \\
	&=v_j'\ge \underline v_j. \label{eq:vL2}
	\end{align}
	By Ineqs.~\raf{eq:mS} and \raf{eq:vL2}, $\tilde \ell_{i,j} = \frac{| S_{i,j} '|^2}{v_i'} \ge \frac{| \tilde S_{i,j} |^2}{\tilde v_i}$, hence, $\tilde F$ is feasible.
	
	Finally, by the first inequality in \raf{e-l-} and the fact that $\ell_{h,t} '> \tfrac{|S_{h,t}|^2}{v_h}=\tilde\ell_{h,t} $, we have $\sum_{e\in\cE}\tilde \ell_{e}<\sum_{e\in\cE} \ell_{e}'$, contradicting the optimality of $F'$ for \textsc{cOPF$'[x']$}. \hfill$\blacksquare$

\end{proof}

\section{Proof of Lemma~\ref{feas}}\label{lprf}

\begin{proof}
	The argument is similar to that in Lemma~\ref{lem:exact}. 
	We apply a slightly modified version of Alg.~\ref{alg:fbsweep} on the solution $F'$ to obtain a feasible solution $\tilde F$.  Replace steps~\ref{alg:fbs0} and~\ref{alg:fbs1} in Alg.~\ref{alg:fbsweep}, respectively, by: 
	\begin{align}\label{e-msteps}
	\text{1: $\tilde x \leftarrow \bar x$; $\tilde v_0 \leftarrow v_0$, and 5: $\tilde \ell_{i,j} \leftarrow \ell'_{i,j}$.}
	\end{align}
	By Steps~\ref{alg:fbs2}, \ref{alg:fbs3} and \ref{alg:fbs4} of the (modified) algorithm, all equality constraints of \textsc{(rcOPF$[\bar x]$)}  are satisfied. 
	By (modified) Step~\ref{alg:fbs1} and the feasibility of $F'$, we also have 
	\begin{align}\label{e-l}
	\tilde \ell_e =\ell_e'\le \overline \ell_e\text{ for all  $e \in \cE$.}
	\end{align}
	Write $\Delta S_e \triangleq \tilde S_e - S_e'$, and $\Delta |S_e|^2 \triangleq  |\tilde S_e|^2 - |S_e'|^2$, for  $e\in \cE$. Let $S_{j}' \triangleq  \sum_{k \in \cN_j} s_k x_k'$, $\tilde S_{j}\triangleq  \sum_{k \in \cN_j} s_k \tilde x_k$, and   $\tilde L_{i,j} \triangleq \sum_{e \in \cE_j\cup\{(i,j)\}}z_e \tilde \ell_e$. Note by \raf{et1} that $\tilde S_{i,j} = \tilde S_{j} + \tilde L_{i,j}$ and, $ S_{i,j}' =  S_{j}' +  \tilde L_{i,j}$. It follows that, for all $(i,j)\in\cE$,
	\begin{align}\label{eq:lem1.0.1}
	\Delta S_{i,j} &=  \tilde S_{j}-S_{j}'=\sum_{k \in \cN_j} s_k \bar x_k-\sum_{k \in \cN_j} s_k x_k' \le 0,
	\end{align}
	where the inequality follows from \raf {feas:con2} and \raf{feas:con3}. 
	In particular, for $(i,j)=(0,1)$, we obtain 
	\begin{align}\label{s-e}
	\tilde s_0^{\rm R} &= -\tilde S_{0,1}^{\rm R}  \ge - S_{0,1}'^{\rm R} = s_0'^{\rm R},
	\end{align}
	implying by {\sf A0$'$} that $f_0(\tilde s_0^{\rm R} \cos \phi + \tilde s_0^{\rm I} \sin \phi)  \ge f_0(s_0'^{\rm R} \cos \phi + s_0'^{\rm I} \sin \phi))$ and hence $f_{\textsc{OPF}}(\tilde s_0,\tilde x)\ge(1-\eps)f_{\textsc{OPF}}(s_0',x')$ follows from \raf{feas:con0} and \raf{feas:con4}.
	
	Furthermore,
	\begin{align*}
	\Delta |S_{i,j}|^2 &= |\tilde S_{i,j}|^2 - |S_{i,j}'|^2\\
	&= (\tilde S_{i,j}^{\rm R})^2 -  (S_{i,j}'^{\rm R})^2 +(\tilde S_{i,j}^{\rm I})^2 - (S_{i,j}'^{\rm I})^2 \\
	&=\Delta S_{i,j}^{\rm R} (\tilde S_{i,j}^{\rm R} +  S_{i,j}'^{\rm R}) + \Delta S_{i,j}^{\rm I} (\tilde S_{i,j}^{\rm I} +  S_{i,j}'^{\rm I})\\
	&=\Delta S_{i,j}^{\rm R}(\tilde S_j^{\rm R}+S_j'^{\rm R}+2\tilde L_{i,j}^{\rm R})+ \Delta S_{i,j}^{\rm I}(\tilde S_j^{\rm I}+S_j'^{\rm I}+2\tilde L_{i,j}^{\rm I})\le 0, 
	\end{align*}
	where the last inequality follows by {\sf A1}, {\sf A4$'$} and \raf{eq:lem1.0.1}. Therefore, 
	\begin{equation}\label{eq:bS}
	|\tilde{S}_{i,j}| \le |S'_{i,j}| \le \overline S_{i,j}.
	\end{equation}
	
	Next, we show $\underline v_j \le \tilde v_j \le \overline v_j$. 	As in~\raf{eq:vLL}, rewrite Cons.~\raf{p1:con4}  by recursively substituting $v_j'$, for $j$ moving away from the root,  and then substituting for $\tilde S_{h,t}$ using \raf{et1}:
	\begin{align}\label{eq:ew1}
	v_j'&=v_0 -2 \sum_{k \in \cN} \re\Big( \sum_{(h,t)\in \cP_k \cap \cP_j} z^*_{h,t} s_k\Big)  x_k' \nonumber \\
	& \quad- \Big(2 \sum_{(h,t)\in \cP_j} \re \big( z_{h,t}^* \sum_{e \in \cE_t} z_{e}  \ell'_{e}\big) +    \sum_{ (h,t) \in\cP_{j} }|z_{h,t}|^2  \ell'_{h,t}\Big)   
	\end{align}
	A similar equation can be derived for $\tilde v_j$, where $x'$ and $\ell'$ in \raf{eq:ew1} are replaced by $\tilde x$ and $\tilde\ell$, respectively. By assumptions {\sf A2} and {\sf A3}, we have
	\begin{align*}
	\tilde v_j&=v_0 -2 \sum_{k \in \cN} \re\Big( \sum_{(h,t)\in \cP_k \cap \cP_j} z^*_{h,t} s_k\Big) \tilde x_k \nonumber\\
	&\quad - \Big(2 \sum_{(h,t)\in \cP_j} \re \big( z_{h,t}^* \sum_{e \in \cE_t} z_{e} \tilde \ell_{e}\big) +    \sum_{ (h,t) \in\cP_{j} }|z_{h,t}|^2 \tilde \ell_{h,t}\Big) \nonumber\\
	&\le v_0 <\overline v_j.
	\end{align*}
	Moreover, since  $\tilde \ell_e = \ell'_e$ and $\tilde x = \bar x$ satisfies~\raf{feas:con1}, we get by {\sf A1} and the feasibility of $F'$,
	\begin{align}\label{eq:bV}
	\tilde v_j
	&\ge v_0 -2 \sum_{k \in \cN} \re\Big( \sum_{(h,t)\in \cP_k \cap \cP_j} z^*_{h,t} s_k\Big)  x_k' \nonumber \\
	&\quad - \Big(2 \sum_{(h,t)\in \cP_j} \re \big( z_{h,t} \sum_{e \in \cE_t} z_{e}  \ell_{e}'\big) +    \sum_{ (h,t) \in\cP_{j} }|z_{h,t}|^2  \ell_{h,t}'\Big) \nonumber \\
	&=v_j'\ge \underline v_j.
	\end{align}
	
	Finally, by inequalities~\raf{eq:bS} and \raf{eq:bV}, $\tilde \ell_{i,j} = \ell_{i,j}' = \frac{|S'_{i,j}|^2}{v'_i} \ge \frac{|\tilde S_{i,j}|^2}{\tilde v_i}$, hence $\tilde \ell_{i,j}$ satisfies Cons.~\raf{p2:con1}. \hfill$\blacksquare$
\end{proof}

\end{document}